\newtheorem{observation}[thm]{Observation}
\newcommand{\todo}[1]{\textbf{\color{red}#1}}
\newcommand{\defword}{Def.}
\newcommand{\theoremword}{Th.}
\newcommand{\figureword}{Fig.}
\newcommand{\lemmaword}{Lem.}
\newcommand{\sectionword}{Sec.}
\newcommand{\appendixword}{Appendix}
\newcommand{\remarkword}{Remark}
\newcommand{\restr}[2]{\ensuremath{\left.#1\right|_{#2}}}
\newcommand{\tuple}[1]{\langle #1 \rangle}
\newcommand{\powerset}[1]{2^{#1}}
\newcommand{\emptyseq}{[\;]}
\newcommand{\seq}[1]{[#1]}
\newcommand{\apseq}[2]{#1 :: #2}
\renewcommand{\O}{\mathcal{O}}
\newcommand{\yes}{\checkmark}
\newcommand{\E}{\textbf{E}\xspace}
\newcommand{\F}{\textbf{F}\xspace}
\newcommand{\FCM}{\textbf{F\texttt{+}(CM)}\xspace}
\newcommand{\G}{\textbf{G}\xspace}
\newcommand{\GCEM}{\textbf{G\texttt{+}(CEM)}}
\newcommand{\KLMP}{\textbf{P}\xspace}
\newcommand{\PCL}{\textbf{PCL}\xspace}
\newcommand{\PCA}{\textbf{PCA}\xspace}
\newcommand{\PCAC}{\textbf{VTA}\xspace}
\newcommand{\VTA}{\textbf{VTA}\xspace}
\newcommand{\Sfive}{\textbf{S5}\xspace}
\newcommand{\Ob}[2]{\bigcirc(#1\!\mid\!#2)}
\newcommand{\PropVars}{\textit{Var}}
\newcommand{\Language}{\mathcal{F}}
\newcommand{\Models}{\mathcal{M}}
\newcommand{\logicsymb}{\mathcal{L}}
\newcommand{\propencoding}[1]{\mathcal{P}(#1)}
\newcommand{\SubF}[1]{\textit{Sub$\mathcal{F}$}(#1)}
\newcommand{\SatForm}[2]{\textit{Sat$\mathcal{F}$}_{#1}(#2)}
\newcommand{\Cond}[1]{\text{Cond}(#1)}
\newcommand{\CondSet}{\mathcal{A}}
\newcommand{\worlds}{W}
\newcommand{\Val}{\mathbb{V}}
\newcommand{\preff}{\succ}
\newcommand{\preffeq}{\succeq}
\newcommand{\preffeqc}{\succeq_c}
\newcommand{\npreffeq}{\not\succeq}
\newcommand{\preffequiv}{\approx}
\newcommand{\preffsort}{\preffeq_S}
\newcommand{\Bet}[2]{\textit{Bet}_{#1}(#2)}
\newcommand{\maximalSet}[2]{\max(#2)}
\newcommand{\maximalSetM}[1]{\maximalSet{\preff}{#1}}
\newcommand{\tset}[2]{||#1||^{#2}}
\newcommand{\best}[2]{best_{#1}(#2)}
\newcommand{\labels}{L}
\newcommand{\labeling}{\mathcal{B}}
\newcommand{\flatten}{\textit{flat}}
\newcommand{\generated}[1]{\textit{gen}(#1)}
\newcommand{\prot}{prot}
\newcommand{\rep}{rep}
\newcommand{\Orbit}[3]{\textit{Orbit}(#1, #2, #3)}
\newcommand{\AllOrbit}[2]{\textit{Cloud}(#1, #2)}
\newcommand{\MaxChain}[2]{\textit{Ray}(#1, #2)}
\newcommand{\MaxOrbChain}[3]{\textit{RayOrb}(#1, #2, #3)}
\newcommand{\worldsb}{U}
\newcommand{\preffeqb}{\preffeq_U}
\newcommand{\preffb}{\preff_U}
\newcommand{\preffeql}{\preffeq
_L}
\newcommand{\preffl}{\preff_L}
\newcommand{\succl}{\prec_L}
\newcommand{\ccsymbol}{\mathfrak{C}}
\newcommand{\preffeqcc}{\preffeq^{gen}}
\newcommand{\worldscc}{\worlds^{gen}}
\newcommand{\Valcc}{\Val^{gen}}
\newcommand{\preffeqsub}{\preffeq^N}
\newcommand{\preffsub}{\preff^N}
\newcommand{\Valsub}{\Val^N}
\newcommand{\PosBox}{\textit{Box}^+(\varphi, M)}
\newcommand{\NegBox}{\textit{Box}^-(\varphi, M)}
\newcommand{\PosOb}{\textit{Ob}^+(\varphi, M)}
\newcommand{\NegOb}{\textit{Ob}^-(\varphi, M)}
\newcommand{\Fal}{\textit{Fal}(\varphi, M)}
\newcommand{\FalBox}{\textit{Fal}^{\,\Box}(\varphi, M)}
\newcommand{\FalOb}{\textit{Fal}^{\,\bigcirc}(\varphi, M)}
\newcommand{\fallabel}[1]{\texttt{fal}_{#1}}
\newcommand{\orblabel}[1]{\texttt{orb}_{#1}}
\newcommand{\allorblabel}[1]{\texttt{cloud}_{#1}}
\newcommand{\maxchainlabel}{\texttt{ray}}
\newcommand{\betmaxchainlabel}[1]{\texttt{c-ray}_{#1}}
\newcommand{\grfallabel}[1]{\texttt{group}_{#1}}
\newcommand{\grbetmaxchainlabel}[1]{\texttt{g-ray}_{#1}}
\newcommand{\condi}{\textup{(\textit{i})}\xspace}
\newcommand{\condii}{\textup{(\textit{ii})}\xspace}
\newcommand{\condiii}{\textup{(\textit{iii})}\xspace}
\newcommand{\condiv}{\textup{(\textit{iv})}\xspace}
\newcommand{\iiisuitable}{\condiii-suitable\xspace}
\newcommand{\antichain}[2]{\textit{antichain}(#2)}
\newcommand{\chain}[2]{\textit{chain}(#2)}
\newcommand{\clique}[2]{\textit{clique}(#2)}
\newcommand{\selection}[3]{\textit{Sel}_{#1}(#2, #3)}
\newcommand{\MBet}[2]{\textit{MBet}_{#1}(#2)}
\newcommand{\maxseq}[3]{\textit{MaxSeq}_{#1}(#2, #3)}\newcommand{\maxseqword}{\textit{MaxSeq}}
\newcommand{\disjset}[3]{\mathcal{D}(#2, #3)}
\newcommand{\disjbest}[3]{d(#2, #3)}
\newcommand{\smc}[1]{\textit{SMC}^{\,#1}(\varphi, M)}
\newcommand{\modelclass}{{\mathfrak{M}}}
\newcommand{\falform}{\texttt{fal\_form}}
\newcommand{\falbox}{\texttt{fal\_box}}
\newcommand{\falboxset}[2]{\textit{FalBox}(#1, #2)}
\newcommand{\encodedf}[2]{F_{#1}(#2)}
\newcommand{\clausesmod}{C^{ev}(\varphi)}
\newcommand{\clausestr}{C^{trans}(\varphi)}
\newcommand{\clausestot}{C^{total}(\varphi)}
\newcommand{\clausesacyc}{C^{acyclic}(\varphi)}
\newcolumntype{P}[1]{>{\centering\arraybackslash}p{#1}}
\begin{document}

\begin{frontmatter}
  \title{LEGO-like Small Model Constructions for
 \AA qvist’s Logics}
  \author{Dmitry Rozplokhas}\footnote{dmitry@logic.at}
  \address{TU Wien, Austria}

  \begin{abstract}
  \AA qvist's logics (\E, \F, \FCM, and \G) are among the best-known systems in the long tradition of preference-based approaches for modeling conditional obligation. While the general semantics of preference models align well with philosophical intuitions, more constructive characterizations are needed to assess computational complexity and facilitate automated deduction. Existing small model constructions from conditional logics (due to Friedman and Halpern) are applicable only to \FCM and \G, while recently developed proof-theoretic characterizations leave unresolved the exact complexity of theoremhood in logic \F. In this paper, we introduce alternative small model constructions assembled from elementary building blocks, applicable uniformly to all four \AA qvist's logics. Our constructions propose alternative semantical characterizations and imply co-NP-completeness of theoremhood. Furthermore, they can be naturally encoded in classical propositional logic for automated deduction.
  \end{abstract}

  \begin{keyword}
  deontic logic, conditional logic, preference models, small model property
  \end{keyword}
 \end{frontmatter}

\section{Introduction}

In deontic logic, the
analysis of various normative scenarios and deontic paradoxes led to the formalization of obligations as conditionals, i.e. as dyadic modalities $\Ob{\gamma}{\alpha}$ read as ``$\gamma$ is obligatory if $\alpha$ holds''.
Traditionally these modalities are formalized using preference-based logics, inspired by the rational choice theory and introduced in the deontic context by Hansson~\cite{Hansson_deontic_logic}.
This approach considers preference models --- a kind of relational models with a ``relative goodness'' relation between worlds; a conditional obligation $\Ob{\gamma}{\alpha}$ is satisfied when $\gamma$ is true in the ``best'' worlds satisfying $\alpha$.
\AA qvist~\cite{Aqvist} formalized these ideas using the language of modal logic and his framework now serves as one of the standard implementations of the preference-based approach in deontic logic.
Initially, the framework comprised three logics of increasing deductive strength: the logic \E that places no restrictions on the preference relation, the logic \F that considers \emph{limited} preference relations to rule out contradictory obligations, and the logic \G that assumes the preference relation to be a total and limited preorder.
A later addition to \AA qvist's family is the logic \FCM~\cite{max_vs_opt} that drops the totality assumption and considers \emph{smooth} preorders, and it is axiomatized by extending \F with the cautious monotony principle, well-known in non-monotonic reasoning~\cite{Gabbay_nonmonotonic_reasoning}. 
These four logics provide a useful scale against which various deontic scenarios can be evaluated.

At the same time, preference models were applied in the neighboring field of conditional reasoning.
Notable examples of conditional frameworks defined in terms of preference models include Lewis's family of logic for counterfactuals~\cite{Lewis_counterfactuals}, Burgess's preferential conditional logic \PCL and its extensions~\cite{Burgess}, and KLM logic for non-monotonic reasoning~\cite{KLM}.
These frameworks consider similar models, so there is an intersection with the \AA qvist family: \FCM coincides with Burgess's logic \PCA, and \G coincides with Lewis's logic \VTA (using the terminology of~\cite{Mariana_thesis}), while the flat fragment of both these logics coincides with KLM logic \KLMP.
However, it is not common in conditional logics 
to consider relaxed notions of preference relations,
since assuming transitivity and smoothness is necessary for a well-behaved consequence relation in the logic~\cite{non_smooth_preferences}.
On the other hand, when the preference relation is treated as comparative goodness, the adequacy of these assumptions becomes controversial (see, e.g., \cite{non_smooth_preferences} and \cite[Sec. 2.3]{SEP_preferences} for an overview of counterexamples).
For this reason, the weaker logics \E and \F play an important role
in the normative reasoning context.

In recent decades, significant progress has been made in exploring variations of preference-model characterization for \AA qvist's logics and their corresponding axiomatizations, surveyed in~\cite{Xavier_handbook_survey}. Now, there's a growing focus on the computational properties of these logics, which is the main motivation for this paper too.
In~\cite{Xavier_handbook_survey} the decidability of theoremhood for all four logics is proven through alternative semantics based on selection functions, and embedding of the weakest logic \E into Higher-Order Logic (HOL) from~\cite{E_in_HOL} is suggested as a potential approach for automated deduction. These approaches however are not suitable for assessment of the exact complexity of logics, which requires more constructive characterizations.
One such characterization came from the proof-theoretic side in the form of
cut-free hypersequent calculi, developed recently for all four \AA qvist's logics~\cite{E_hypersequents,F_hypersequents,FCM_G_hypersequents}. 
For \E, \FCM and \G the proof search in the calculi has optimal co-NP complexity, and polynomial-size preference countermodels can be reconstructed from failed derivations~\cite{E_hypersequents,Mariana_thesis}. 
 At the same time, the limitedness condition of \F seems 
difficult to handle both model-theoretically and proof-theoretically.
The calculus for \F~\cite{F_hypersequents}, which is an even more complicated variation of the calculi for logic {\bf GL}~\cite{Valentini_GL},
gives only a co-NEXP upper bound for theoremhood (which is the best estimation so far) and no countermodel construction. 


Another powerful approach for establishing computational complexity of conditional logics is \emph{small model constructions} proposed by Friedman and Halpern~\cite{Friedman_Halpern_small_model_constructions} for Burgess' logic \PCL and its extensions, which transforms any satisfying model into a satisfying model of bounded size. Their approach covers in particular extensions \PCA and \PCAC (i.e. \AA qvist's logics \FCM and \G), and establishes co-NP-completeness of theoremhood for them.  However, this approach significantly relies on the smoothness and transitivity of the preference relation and therefore is not applicable for weaker logics \E and \F (see \remarkword~\ref{rem:Friedman_Halpern_FCM} for details).

In this paper, we propose alternative small model constructions
to handle all four \AA qvist's logics uniformly.
We compose a model of polynomial size by assembling elementary building blocks (chains, antichains, and cliques of worlds selected from any given model) like LEGO.
We provide sufficient conditions for such construction to be a countermodel 
and define a suitable construction for each \AA qvist's logic.
There are two main applications for our constructions, obtained uniformly for all logics.

{\bf Alternative semantical characterizations of theoremhood.}
Our results imply that theoremhood can be characterized by finite models. Moreover, for finite models the complicated properties of limitedness and smoothness (which are not frame properties) can be replaced by natural frame properties: acyclicity and transitivity of the preference relation, respectively.

{\bf Complexity and automated deduction.} The polynomial size of models together with easily checkable frame properties immediately imply co-NP-completeness of theoremhood (including logic \F, for which it was an open problem) and allow for natural encodings in classical propositional logics, which can be utilized for efficient automated deduction using SAT-solvers.

\section{Preliminaries}

The syntax of \AA qvist's logics extends the usual propositional language with two modalities: unary $\Box$ for necessity and binary $\Ob{\cdot}{\cdot}$ for conditional obligation. We define the formulas over the set $\PropVars$ of propositional variables.

\[ \Language ::= \; x \in \PropVars \;\mid\; \neg \Language \;\mid\; \Language \wedge \Language \;\mid\; \Box \Language \;\mid\; \Ob{\Language}{\Language}   \]

We will use small Greek letters to denote formulas. $|\varphi|$ will denote size of the formula (number of symbols), $\SubF{\varphi}$ will denote the set of all subformulas of $\varphi$ (including $\varphi$), and $\Cond{\varphi} = \{ \alpha \mid {\Ob{\gamma}{\alpha} \in \SubF{\varphi}} \}$.


\begin{definition}
A \emph{preference model} is a triple ${\tuple{W, \succeq, \Val}}$ where $W$ is a (non-empty) set of worlds, $\succeq$ is a binary relation on $W$, and $\Val \colon \PropVars \to \powerset{W}$ is a valuation function. We denote by $\worlds(M)$ the set of worlds of a given model. 
\end{definition}

The semantics of obligation is based on the notion of ``best'' worlds in the preference model. There are different definitions of bestness appearing in the literature (see \cite{general_preference_relations,max_vs_opt} for the comparison of different definitions), we will use the most common one~--- maximality: a world is a best world when there are no worlds that are \emph{strictly} more preferable. As usual we denote by $\preff$ a strict version of $\preffeq$ ($w_1 \preff w_2$ when $w_1 \preffeq w_2$ and $w_2 \npreffeq w_1$). We will use the notation $\Bet{\preff}{v} = \{ w \in W \mid w \preff v \}$ for a set of worlds strictly preferable to (better than) a given one.

\begin{definition}
For a preference model $M = {\tuple{W, \succeq, \Val}}$ and $U \subseteq W$ we define ${\maximalSetM{U} = \{ v \in U \;\mid\; \nexists u \in U \colon u \preff v \}}$. 
\end{definition}

Satisfaction of $\Ob{\gamma}{\alpha}$ is defined using this notion of bestness: $\Ob{\gamma}{\alpha}$ is true when $\gamma$ is true in all maximal worlds satisfying $\alpha$ (we will call such worlds \emph{$\alpha$-maximal}). And $\Box \beta$ is true when $\beta$ is true in all worlds (so we treat $\Box$ as the universal \Sfive modality). 

\begin{definition}{\bf(Satisfaction)}
For a preference model $M = {\tuple{W, \succeq, \Val}}$ the \emph{truth set} $\tset{\varphi}{M}$ of a formula $\varphi$ is defined inductively:
\vspace{1mm}
\begin{itemize}
\item $w \in \tset{x}{M}$ for $x \in \PropVars$ when $w \in \Val(x)$,
\item $w \in \tset{\neg \psi}{M}$ when $w \not\in \tset{\psi}{M}$,
\item $w \in \tset{\psi_1 \wedge \psi_2}{M}$ when $w \in \tset{\psi_1}{M}$ and $w \in \tset{\psi_2}{M}$,
\item $w \in \tset{\Box \beta}{M}$ when $\tset{\beta}{M} = W$,
\item $w \in \tset{\Ob{\gamma}{\alpha}}{M}$ when $\maximalSetM{\tset{\alpha}{M}} \subseteq \tset{\gamma}{M}$.
\end{itemize}
\vspace{2mm}
We say that $w$ satisfies $\varphi$ in $M$ (denoted $M, w \models \varphi$) when $w \in \tset{\varphi}{M}$, and that $M$ validates $\varphi$ (denoted $M \models \varphi$) when $\tset{\varphi}{M} = W$. For $U \subseteq W$ we denote a set of formulas satisfiable in $U$ as $\SatForm{M}{U} = \{ \psi \mid \exists u \in U : M, u \models \psi \}$.   
\end{definition}

Notice that the satisfaction of both $\Box \beta$ and $\Ob{\gamma}{\alpha}$ does not depend on the world of evaluation.

Different \AA qvist's logics are defined by different classes of preference models. Some of these classes are defined using the properties of preference relation $\preffeq$ in the model, we will use two properties: transitivity ($\preffeq$ is transitive when $w_1 \preffeq w_2$ and $w_2 \preffeq w_3$ imply $w_1 \preffeq w_3$) and totalness ($\preffeq$ is total when for any $w_1, w_2 \in W$ either $w_1 \preffeq w_2$ or $w_2 \preffeq w_1$).
Another property used for the characterization of deontic logic is what Lewis called ``limit assumption'', which ensures the existence of best worlds. The are different formal definitions of this assumption in the literature, we will use two versions from~\cite{Xavier_handbook_survey}: \emph{limitedness} and \emph{smoothness}.

\begin{definition}{\bf(Limit conditions)}
Let $M = {(W, \succeq, \Val)} \in \Models$.
$M$ is \emph{limited} when for any formula $\alpha$ if  $\tset{\alpha}{M} \neq \emptyset$ then $\maximalSetM{\tset{\alpha}{M}} \neq \emptyset$.
$M$ is \emph{smooth} when for any formula $\alpha$ and any world $w \in \tset{\alpha}{M}$ there exists $u \in \maximalSetM{\tset{\alpha}{M}}$ such that either $u = w$ or $u \preff w$.
\end{definition}

\begin{figure}
    \centering
    \begin{tabular}{|c|P{15mm}|P{15mm}|P{15mm}|P{15mm}|}
    \hline
    \multirow{2}{*}{Logic} & \multicolumn{2}{c|}{Limit conditions} & \multicolumn{2}{c|}{Properties of $\preffeq$}  \\
    \hhline{|~|-|-|-|-|}
     & limited & smooth & transitive & total  \\
    \hline
    $\E$ & & & &  \\
    \hline
    $\F$ & \yes & & &  \\
    \hline
    $\FCM$ & & \yes & \yes & \\
    \hline
    $\G$ & & \yes & \yes & \yes \\
    \hline
    \end{tabular}
    \caption{Preference-semantical characterizations for \AA qvist's logics~\cite[Tab. 1 and 2]{Xavier_handbook_survey} (with maximality as the notion of bestness). }
    \label{tab:initial_taxonomy}
\end{figure}

We rely on the semantical characterizations of the four \AA qvist logics in \figureword~\ref{tab:initial_taxonomy}, which are presented (among various other characterizations) in~\cite{Xavier_handbook_survey}.

\begin{definition}
Formula $\varphi$ is a theorem of \AA qvist's logic $\logicsymb$ iff $M \vDash \varphi$ for any preference model $M$ that satisfies model conditions for logic $\logicsymb$ in \figureword~\ref{tab:initial_taxonomy}.
\end{definition}

We will call a preference model $M$ a \emph{countermodel for a formula $\varphi$} if $M \not\models \varphi$ and we will further call it an \emph{$\logicsymb$-countermodel} if it belongs to a class of models corresponding to a logic $\logicsymb$ from \figureword~\ref{tab:initial_taxonomy}.

\section{Small Model Constructions}

This section contains the main technical result of the paper: for every logic $\logicsymb$ from the \AA qvist family, we will show how an arbitrary $\logicsymb$-countermodel $M$ for a formula $\varphi$ can be transformed into an $\logicsymb$-countermodel with the number of worlds bounded polynomially w.r.t. $|\varphi|$.
We will achieve this by selecting a finite number of worlds from $M$, adding copies for some of them, and defining a new preference relation on the selected worlds without changing the valuation. We call such transformation a \emph{rearrangement} of a model.

\begin{definition}
We say that a model $M' = \tuple{W', \preffeq', \Val'}$ \emph{rearranges} the model $M = \tuple{W, \preffeq, \Val}$ when there exists a \emph{prototype function} $\prot \colon W' \to W$ such that $w' \in \Val'(x)$ is equivalent to $\prot(w) \in \Val(x)$ for all $x \in \PropVars$.
\end{definition}



Our main goal for the rearranged model is to have each of its worlds satisfying the same subformulas of $\varphi$ as its prototype does. 
Evaluation of a formula in a world involves other worlds only in the cases of $\Box$ and $\Ob{\cdot}{\cdot}$ modalities. Therefore, we only need to ensure that the rearranged model validates the same modalities among subformulas of $\varphi$ as the original model does, while the satisfaction (and non-satisfaction) of other subformulas will be preserved in the rearranged model automatically. 

We will examine the cases of validated and non-validated modalities separately. Let us denote by $\PosBox$ (resp. $\PosOb$) the set of subformulas of $\varphi$ of the form $\Box \beta$ (resp. $\Ob{\gamma}{\alpha}$) that are validated by $M$, and by $\NegBox$ and $\NegOb$ the sets of subformulas of $\varphi$ of the corresponding form that are not validated by $M$. 
To falsify $\Box \beta \in \NegBox$ and $\Ob{\gamma}{\alpha} \in \NegOb$ we need to take in $M'$ some worlds that were falsifying these modalities in $M$. 
While the evaluation of $\Box \beta$ modalities relies only on the presence of the worlds satisfying $\beta$ in the model, special care is needed to ensure that the evaluation of $\Ob{\gamma}{\alpha}$ is the same. Namely, if a world $w$ was made not $\alpha$-maximal in $M$ by some world $u \in \tset{\alpha}{M}$ such that $u \preff w$ we need to preserve this violation of maximality in $M'$. Conversely, we need to ensure that we are not violating $\alpha$-maximality in $M'$ for the world falsifying $\Ob{\gamma}{\alpha}$.  
This reasoning leads to the following four conditions sufficient to ensure that a rearranged model $M'$ is a countermodel for $\varphi$.

\begin{theorem}
\label{lem:rearranged_countermodel_conditions}
Suppose $M = \tuple{W, \preffeq, \Val}$ is a countermodel for $\varphi$ and ${M' = \tuple{W', \preffeq', \Val'}}$ rearranges $M$ with the prototype function $\prot \colon W' \to W$. Then the following conditions are sufficient for $M' \not\models \varphi$.
\begin{enumerate}
\item[\condi] There exists $v' \in W'$ such that $M, \prot(v') \not\models \varphi$.
\item[\condii] For any $\Box \beta \in \NegBox$ there exists $v' \in W'$ such that $M, \prot(v') \not\models \beta$.
\item[\condiii] For any ${\Ob{\gamma}{\alpha} \in \NegOb}$ there exists ${v' \in W'}$ such that ${\prot(v') \in \maximalSetM{\tset{\alpha}{M}} \setminus \tset{\gamma}{M}}$ and for all ${u' \preff' v}'$ holds ${\prot(u') \preff \prot(v')}$. 
\item[\condiv] For any $w' \in W'$, for all $\Ob{\gamma}{\alpha} \in \PosOb$ if there exists $u \preff \prot(w')$ such that $M, u \models \alpha$ then there exists $s' \preff' w'$ such that $M, \prot(s') \models \alpha$.
\end{enumerate}
\end{theorem}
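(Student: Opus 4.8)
The plan is to reduce the whole statement to a single \emph{invariance lemma}: for every subformula $\psi \in \SubF{\varphi}$ and every world $w' \in W'$,
\[ M', w' \models \psi \quad\Longleftrightarrow\quad M, \prot(w') \models \psi. \]
Once this is in hand, condition \condi\ finishes the proof at once: it supplies a $v' \in W'$ with $M, \prot(v') \not\models \varphi$, and the invariance lemma then gives $M', v' \not\models \varphi$, so $M' \not\models \varphi$. I would establish the invariance lemma by induction on the structure of $\psi$, noting throughout that $\Box$ and $\Ob{\cdot}{\cdot}$ are evaluated uniformly across all worlds, so for the modal cases it suffices to match the single global truth value of the modality in $M'$ against the one in $M$.

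For the base case $\psi = x \in \PropVars$ the equivalence is exactly the defining property of a rearrangement, since $w' \in \Val'(x)$ iff $\prot(w') \in \Val(x)$. The Boolean cases $\neg$ and $\wedge$ follow routinely from the induction hypothesis. For $\psi = \Box\beta$ I split on whether the modality is validated. If $\Box\beta \in \PosBox$, then $\tset{\beta}{M} = W$; since $\prot(w') \in W$ for every $w'$, the induction hypothesis gives $M', w' \models \beta$ for all $w'$, so $M' \models \Box\beta$. If $\Box\beta \in \NegBox$, condition \condii\ yields a $v'$ with $M, \prot(v') \not\models \beta$, hence $M', v' \not\models \beta$ by induction, so $M' \not\models \Box\beta$. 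Either way the truth value of $\Box\beta$ agrees in the two models.

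The heart of the argument is $\psi = \Ob{\gamma}{\alpha}$. Applying the induction hypothesis to $\alpha$ and $\gamma$ gives $\tset{\alpha}{M'} = \prot^{-1}(\tset{\alpha}{M})$ and $\tset{\gamma}{M'} = \prot^{-1}(\tset{\gamma}{M})$. If $\Ob{\gamma}{\alpha} \in \PosOb$, I take any $w' \in \maximalSetM{\tset{\alpha}{M'}}$ and first show $\prot(w') \in \maximalSetM{\tset{\alpha}{M}}$: were there some $u \succ \prot(w')$ with $M, u \models \alpha$, condition \condiv\ would produce $s' \succ' w'$ with $M, \prot(s') \models \alpha$, i.e. $s' \in \tset{\alpha}{M'}$, contradicting the $\succ'$-maximality of $w'$. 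Then $\maximalSetM{\tset{\alpha}{M}} \subseteq \tset{\gamma}{M}$ forces $\prot(w') \in \tset{\gamma}{M}$, so $w' \in \tset{\gamma}{M'}$, and as $w'$ was arbitrary, $M' \models \Ob{\gamma}{\alpha}$. If $\Ob{\gamma}{\alpha} \in \NegOb$, I use the witness $v'$ from condition \condiii: its prototype lies in $\maximalSetM{\tset{\alpha}{M}} \setminus \tset{\gamma}{M}$, so $v' \in \tset{\alpha}{M'} \setminus \tset{\gamma}{M'}$, and any $u' \succ' v'$ has $\prot(u') \succ \prot(v')$, whence $\prot(u') \notin \tset{\alpha}{M}$ (as $\prot(v')$ is $\alpha$-maximal) and thus $u' \notin \tset{\alpha}{M'}$. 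So $v'$ is $\succ'$-maximal in $\tset{\alpha}{M'}$ yet falsifies $\gamma$, giving $M' \not\models \Ob{\gamma}{\alpha}$.

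I expect the obligation case to be the main obstacle, precisely because maximality must be transferred in \emph{both} directions between $\succ$ and $\succ'$, and conditions \condiv\ and \condiii\ are tailored for exactly these two transfers: \condiv\ ensures that a $\succ$-dominating $\alpha$-world in $M$ is reflected by a $\succ'$-dominating $\alpha$-world in $M'$ (preserving positive obligations), while \condiii\ ensures the chosen falsifying witness stays $\succ'$-maximal because each of its $\succ'$-successors has a strictly $\succ$-better prototype (preserving negative obligations). The only care required is to keep the quantifier alternations straight and to use the world-independence of the modalities, so that agreeing on global truth values at each inductive step is enough.
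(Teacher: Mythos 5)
Your proposal is correct and follows essentially the same route as the paper's own proof: the identical invariance lemma ($M', w' \models \psi$ iff $M, \prot(w') \models \psi$ for all $\psi \in \SubF{\varphi}$) established by induction on $\psi$, with the same case split on $\PosBox$/$\NegBox$ and $\PosOb$/$\NegOb$, using \condii, \condiii, and \condiv exactly where the paper does and closing with \condi. The only cosmetic difference is your reformulation of the induction hypothesis as $\tset{\alpha}{M'} = \prot^{-1}(\tset{\alpha}{M})$, which is just a restatement of the same fact.
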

\begin{proof}
We will prove a generalized statement: for any $w' \in W'$ and any $\psi \in \SubF{\varphi}$ holds $M', w' \models \psi$ iff $M, \prot(w') \models \psi$. Then $M' \not\models \varphi$ follows from condition \condi. The proof is by induction on $\psi$
with case analysis on $\psi$ belonging to $\PosBox$ or $\NegBox$ for $\psi = \Box \beta$ and on $\psi$ belonging to $\PosOb$ or $\NegOb$ for $\psi = \Ob{\gamma}{\alpha}$. Conditions \condii, \condiii, and \condiv directly cover cases $\psi \in \NegBox$, $\psi \in \NegOb$, and $\psi \in \PosOb$ respectively (see appendix~\ref{sec:inductive_details} for details).
\end{proof}

Ensuring conditions \condi and \condii is simple: we need to take arbitrary worlds from $(W \setminus \tset{\varphi}{M})$ and from $(W \setminus \tset{\beta}{M})$ for each $\Box \beta \in \NegBox$. For this, we will use a \emph{representative function} $\rep \colon (\powerset{W} \setminus \{ \emptyset \}) \to W$ that chooses an element $\rep(S) \in S$ from any given non-empty subset $S$ of $W$ (thus, we use the axiom of choice explicitly in our construction). We will also need representatives of ${(\maximalSetM{\tset{\alpha}{M}} \setminus \tset{\gamma}{M})}$ for every $\Ob{\gamma}{\alpha} \in \NegOb$ for condition \condiii. Let us denote the set of all such falsifying worlds $\Fal$.

\begin{definition}{\bf(Falsifying worlds)}
For a model $M = \tuple{W, \preffeq, \Val}$ such that $M \not\models \varphi$, $\Fal = {\rep(W \setminus \tset{\varphi}{M}) \cup \FalBox \cup \FalOb}$, where $\FalBox = {\{ \rep(W \setminus \tset{\beta}{M}) \mid \Box \beta \in \NegBox \}}$, $\FalOb = {\{ \rep(\maximalSetM{\tset{\alpha}{M}} \setminus \tset{\gamma}{M}) \mid \Ob{\gamma}{\alpha} \in \NegOb \}}$.
\end{definition}

The rest of the rearranged model will be chosen to ensure the satisfaction of conditions \condiii and \condiv. 
We will represent our small model constructions as \emph{composite models}, assembled from \emph{blocks}. A \emph{block} $B$ is a finite selection of worlds from $M$ with some new preference relation on them (in our cases it will be either an empty relation, a strict linear order, or a universal relation).

\begin{definition}{\bf (Block)}
A \emph{block} on $M$ is a tuple $\tuple{\worldsb, \preffeqb}$ where $\worldsb \subseteq \worlds(M)$ and $\preffeqb$ is a binary relation on $\worldsb$.
We will use $\worlds(B)$ to refer to the set of worlds in $B$. For a given $M$ and $U \subseteq \worlds(M)$ we will consider the blocks of the following forms:
\begin{itemize}
\item $\antichain{M}{U} = \tuple{U, \preffeq^a}$, where $\preffeq^a$ is an empty relation;
\item $\chain{M}{S} = \tuple{ \{w_i\}_{i=1}^n, \preffeq^{ch} }$ if $S = [w_1, \dots, w_n]$ is a finite ordered sequence of worlds and $w_i \preffeq^{ch} w_j$ iff $i \le j$;
\item $\clique{M}{U} = \tuple{U, \preffeq^{cl}}$ where ${u_1 \preffeq_{cl} u_2}$ for all $u_1, u_2 \in U$.
\end{itemize}
\end{definition}

A composite construction consists of the number of blocks with an additional preference relation on them. Each composite construction generates a model rearranging $M$, in which the new preference relation is given by combining the relation between blocks and the relations inside blocks. To allow multiple occurrences of the same block in the construction we define the composite construction using labels and a labeling function.

\begin{definition}{\bf(Composite construction)}
A \emph{composite construction on $M$} is a tuple $\tuple{\labels, \preffeql, \labeling}$ where $\labels$ is a set of labels, $\preffeql$ is a binary relation on $\labels$, and $\labeling$ is a labeling function that maps every label from $\labels$ into a block on $M$. 
Each composite construction $\ccsymbol = \tuple{\labels, \preffeql, \labeling}$ on $M = \tuple{\worlds, \preffeq, \Val}$ \emph{generates} a model $\generated{\ccsymbol} = \tuple{\worldscc, \preffeqcc, \Valcc}$, where 
\begin{itemize}
\item $\worldscc = \{ (l, w) \mid l \in \labels, w \in \worlds(\labeling(l)) \}$;
\item ${(l_1, w_1) \preffeqcc (l_2, w_2)}$ iff either $l_1 \preffeql l_2$ or both $l_1 = l_2$ and $w_1 \preffeqb w_2$ for $\labeling(l_1) = \tuple{\worldsb, \preffeqb}$;
\item $(l, w) \in \Valcc(x)$ iff $w \in \Val(x)$.
\end{itemize}
\end{definition}


We can now simplify the conditions of \theoremword~\ref{lem:rearranged_countermodel_conditions} for models generated by composite constructions. For conditions \condi-\condiii it is enough to have every world $v$ from $\Fal$ in some block in the construction, such that this block does not have any $\preffb$-preferable worlds inside and all blocks $\preffl$-prefferable to it have only worlds from $\Bet{\preffeq}{v}$. Also, we can ensure \condiv separately for each block by either ensuring it inside this block or having another block \mbox{$\preffl$-preferred} to it that has all worlds required in \condiv. We express these condition using the following notions of block compatibility.

\begin{definition}{\bf(Block compatibility properties).} For a given model $M$ and a formula $\varphi$ we define the following properties of blocks on $M$:
\begin{itemize}
\item $\tuple{\worldsb, \preffeqb}$ is \emph{flat} when $w_1 \not\preffb w_2$ for any $w_1, w_2 \in \worldsb$.
\item $B'$ is \emph{{\condiii-suitable} for} $B$ when $\worlds(B') \subseteq \Bet{\preff}{w}$ for any $w \in \worlds(B)$.
\item $\tuple{\worldsb, \preffeqb}$ is \emph{{\condiv-safe}} when for every $w \in \worldsb$ and every ${\alpha \in \SatForm{M}{\Bet{\preff}{w}} \cap \Cond{\varphi}}$ there is $w' \preffb w$ such that  $M, w' \models \alpha$.
\item $B'$ \emph{{\condiv-covers}} $B$ when $\SatForm{M}{\Bet{\preff}{w}} \cap \Cond{\varphi} \subseteq \SatForm{M}{\worlds(B')}$ for any $w \in \worlds(B)$.
\end{itemize}    
\end{definition}

\newcommand{\rulesep}{\unskip\hfill{\vrule}\hfill\ignorespaces}

\def\circlesize{4pt}
\def\singletonsize{15pt}
\newcommand{\worldnode}[2]{
  \draw (#1,#2) node[circle,draw,inner sep=1pt,fill=gray!50,minimum size=\circlesize,semithick]{};
}

\def\dotslen{0.115cm}
\newcommand{\dotsnode}[2]{
  \draw [dotted,thick](#1-\dotslen,#2) -- (#1+\dotslen,#2);
}

\def\tipstep{0.120cm}
\def\dotshei{0.115cm}
\newcommand{\vdotsnode}[2]{
  \draw [dotted,thick](#1,#2-\dotshei) -- (#1,#2+\dotshei);
  \draw [thick,-to](#1,#2+\dotshei+\tipstep-0.05) -- (#1,#2+\dotshei+\tipstep);
}

\def\tipvstep{0.240cm}
\newcommand{\rdotsnode}[2]{
  \dotsnode{#1}{#2};
  \draw [thick,-Stealth](#1+\dotslen+\tipvstep-0.05,#2) -- (#1+\dotslen+\tipvstep,#2);
}

\newcommand{\cldotsnode}[2]{
  \dotsnode{#1}{#2};
  \draw [thick,-to](#1+\dotslen+\tipstep-0.05,#2) -- (#1+\dotslen+\tipstep,#2);
  \draw [thick,-to](#1-\dotslen-\tipstep+0.05,#2) -- (#1-\dotslen-\tipstep,#2);
}

\def\sinwh{14pt}
\newcommand{\sincomp}[6]{
  \node[draw,rectangle,rounded corners,dashed,minimum width=\sinwh,minimum height=\sinwh,label={[xshift=#5,yshift=#6]#4}] (#3) at ({#1},{#2}) {};
  \worldnode{#1}{#2};
}

\def\worldhdist{0.45cm}
\def\hcompw{50pt}
\def\hcomph{16pt}
\newcommand{\achcomp}[6]{
  \node[draw,rectangle,rounded corners,dashed,minimum width=\hcompw,minimum height=\hcomph,label={[xshift=#5,yshift=#6]#4}] (#3) at ({#1},{#2}) {};
  \worldnode{#1-\worldhdist}{#2};
  \worldnode{#1+\worldhdist}{#2};
  \dotsnode{#1}{#2};
}

\newcommand{\clcomp}[6]{
  \node[draw,rectangle,rounded corners,dashed,minimum width=\hcompw,minimum height=\hcomph,label={[xshift=#5,yshift=#6]#4}] (#3) at ({#1},{#2}) {};
  \worldnode{#1-\worldhdist}{#2};
  \worldnode{#1+\worldhdist}{#2};
  \cldotsnode{#1}{#2};
}

\def\vcompw{20pt}
\def\vcomph{45pt}
\def\chlowshift{0.1}
\newcommand{\chcomp}[6]{
  \node[draw,rectangle,rounded corners,dashed,minimum width=\vcompw,minimum height=\vcomph,label={[xshift=#5,yshift=#6]#4}] (#3) at ({#1},{#2+\chlowshift}) {};
  \worldnode{#1}{#2-\worldhdist+2*\chlowshift};
  \worldnode{#1}{#2+\worldhdist};
  \vdotsnode{#1}{#2};
}

\def\comphdist{2.3}
\def\compvdist{2}
\def\compvdistch{2.7}
\def\arroww{0.65pt}

\begin{figure}[!t]

\centering
\begin{tikzpicture}

\usetikzlibrary{shapes.geometric}

\def\framewidth{10cm}
\def\frameheight{12cm}
\draw[step=5.95cm,black] (0,0) grid (11.9,11.9);

\def\leftcenter{3cm}
\def\rightcenter{9cm}
\def\upcenter{9cm}
\def\downcenter{3cm}
\def\captionshift{70pt}
\def\falstep{35pt}


\node at (\leftcenter, \upcenter-\captionshift) {(a) $\smc{\E}$};

\def\efallev{-45pt}
\def\eorblev{-5pt}
\def\ecloudlevb{30pt}
\def\ecloudlevt{60pt}

\def\arachup{0.285cm}
\def\archhor{0.34cm}
\def\archver{0.7cm}
\def\arachhor{0.82cm}
\def\arachver{0.25cm}

\sincomp{\leftcenter-\falstep}{\upcenter+\efallev}{ef1}{$\fallabel{v_1}$}{-18}{-5};
\dotsnode{\leftcenter}{\upcenter+\efallev};
\sincomp{\leftcenter+\falstep}{\upcenter+\efallev}{efn}{$\fallabel{v_n}$}{20}{-5};

\achcomp{\leftcenter-\falstep}{\upcenter+\eorblev}{eorb1}{$\orblabel{v_1}$}{-18}{-1.5};
\dotsnode{\leftcenter}{\upcenter+\eorblev};
\achcomp{\leftcenter+\falstep}{\upcenter+\eorblev}{eorbn}{$\orblabel{v_n}$}{18}{-1.5};

\achcomp{\leftcenter}{\upcenter+\ecloudlevb}{aorb1}{$\allorblabel{1}$}{0}{-1};
\achcomp{\leftcenter+\falstep}{\upcenter+\ecloudlevt}{aorb2}{$\allorblabel{2}$}{0}{-1};
\achcomp{\leftcenter-\falstep}{\upcenter+\ecloudlevt}{aorb3}{$\allorblabel{3}$}{0}{-1};

\draw [line width=\arroww,-Stealth] (ef1) -- (eorb1);
\draw [line width=\arroww,-Stealth] (efn) -- (eorbn);
\draw [line width=\arroww,-Stealth] (aorb2) -- (aorb3);
\draw [line width=\arroww,-Stealth] (\leftcenter-\falstep,\upcenter+\eorblev+\arachup) -- (\leftcenter-\arachhor,\upcenter+\ecloudlevb-\arachver);
\draw [line width=\arroww,-Stealth] (\leftcenter+\falstep,\upcenter+\eorblev+\arachup) -- (\leftcenter+\arachhor,\upcenter+\ecloudlevb-\arachver);
\draw [line width=\arroww,-Stealth] (\leftcenter-\falstep,\upcenter+\ecloudlevt-\arachup) -- (\leftcenter-\arachhor,\upcenter+\ecloudlevb+\arachver);
\draw [line width=\arroww,-Stealth] (\leftcenter+\arachhor,\upcenter+\ecloudlevb+\arachver) -- (\leftcenter+\falstep,\upcenter+\ecloudlevt-\arachup);


\node at (\rightcenter, \upcenter-\captionshift) {(b) $\smc{\F}$};

\def\fchlev{40pt}

\sincomp{\rightcenter-\falstep}{\upcenter+\efallev}{ff1}{$\fallabel{v_1}$}{-18}{-5};
\dotsnode{\rightcenter}{\upcenter+\efallev};
\sincomp{\rightcenter+\falstep}{\upcenter+\efallev}{ffn}{$\fallabel{v_n}$}{20}{-5};

\achcomp{\rightcenter-\falstep}{\upcenter+\eorblev}{forb1}{$\orblabel{v_1}$}{-18}{-1.5};
\dotsnode{\rightcenter}{\upcenter+\eorblev};
\achcomp{\rightcenter+\falstep}{\upcenter+\eorblev}{forbn}{$\orblabel{v_n}$}{18}{-1.5};

\chcomp{\rightcenter}{\upcenter+\fchlev}{fch}{$\maxchainlabel$}{0}{-1};

\draw [line width=\arroww,-Stealth] (ff1) -- (forb1);
\draw [line width=\arroww,-Stealth] (ffn) -- (forbn);
\draw [line width=\arroww,-Stealth] (\rightcenter-\falstep,\upcenter+\eorblev+\arachup) -- (\rightcenter-\archhor,\upcenter+\fchlev-\archver);
\draw [line width=\arroww,-Stealth] (\rightcenter+\falstep,\upcenter+\eorblev+\arachup) -- (\rightcenter+\archhor,\upcenter+\fchlev-\archver);


\node at (\leftcenter, \downcenter-\captionshift) {(c) $\smc{\FCM}$};

\def\cmfallev{-30pt}
\def\cmorblev{30pt}

\sincomp{\leftcenter-\falstep}{\downcenter+\cmfallev}{cf1}{$\fallabel{v_1}$}{-18}{-5};
\dotsnode{\leftcenter}{\downcenter+\cmfallev};
\sincomp{\leftcenter+\falstep}{\downcenter+\cmfallev}{cfn}{$\fallabel{v_n}$}{20}{-5};

\chcomp{\leftcenter-\falstep}{\downcenter+\cmorblev}{corb1}{$\betmaxchainlabel{v_1}$}{5}{-3};
\dotsnode{\leftcenter}{\downcenter+\cmorblev};
\chcomp{\leftcenter+\falstep}{\downcenter+\cmorblev}{corbn}{$\betmaxchainlabel{v_n}$}{5}{-3};

\draw [line width=\arroww,-Stealth] (cf1) -- (corb1);
\draw [line width=\arroww,-Stealth] (cfn) -- (corbn);


\node at (\rightcenter, \downcenter-\captionshift) {(d) $\smc{\G}$};

\def\gdotslev{10pt}

\clcomp{\rightcenter-\falstep}{\downcenter+\cmfallev}{gf1}{$\grfallabel{S_1}$}{-23}{-1.5};
\chcomp{\rightcenter-\falstep}{\downcenter+\cmorblev}{gotb1}{$\grbetmaxchainlabel{S_1}$}{5}{-3};

\rdotsnode{\rightcenter}{\downcenter+\gdotslev};

\clcomp{\rightcenter+\falstep}{\downcenter+\cmfallev}{gfn}{$\grfallabel{S_n}$}{25}{-1.5};
\chcomp{\rightcenter+\falstep}{\downcenter+\cmorblev}{gotbn}{$\grbetmaxchainlabel{S_n}$}{5}{-3};

\draw [line width=\arroww,-Stealth] (gf1) -- (gotb1);
\draw [line width=\arroww,-Stealth] (gfn) -- (gotbn);

\end{tikzpicture}

\DeclareRobustCommand{\dotsinline}{\tikz\dotsnode{0}{0};\xspace}
\DeclareRobustCommand{\vertarrowinline}{\tikz\vdotsnode{0}{0};\xspace}
\DeclareRobustCommand{\cliquearrowinline}{\tikz\cldotsnode{0}{0};\xspace}
\DeclareRobustCommand{\arrowinline}{\tikz\rdotsnode{0}{0};\xspace}

\caption{Small model constructions for \AA qvist's logics. Gray circles represent worlds, dashed rectangles represent blocks. Symbol \raisebox{2pt}{\dotsinline} inside a block indicates an antichain, \mbox{\;\vertarrowinline\; indicates a chain}, and \;\raisebox{2pt}{\cliquearrowinline} indicates a clique. 
Solid arrows represent the preference relation $\preffeql$ between blocks: an arrow from a block $l_1$ to a block $l_2$ means $l_2 \preffeql l_1$. The arrow \,\arrowinline\, between blocks in construction $\smc{\G}$ means that there is a linear order on blocks. Note that the preference relation in constructions $\smc{\E}$ and $\smc{\F}$ is not transitive.  }

\label{fig:small_model_constructions}
\end{figure}

Using these notions we can reformulate (the weaker version of) \theoremword~\ref{lem:rearranged_countermodel_conditions} for composite constructions as follows.

\begin{theorem}
\label{lem:composite_countermodel_conditions}
Let $M \not\models \varphi$ and $\ccsymbol = \tuple{\labels, \preffeql, \labeling}$ be a composite construction on $M$. For $\generated{\ccsymbol} \not\models \varphi$ it is sufficient that:
\begin{enumerate}[label=(\alph*)]
\item For every falsifying world $v \in \Fal$ there is a label $b_v \in \labels$ such that $v \in \worlds(\labeling(b_v))$ and $\labeling(b_v)$ is flat and $\labeling(b')$ is \iiisuitable for $\labeling(b_v)$ for every $b' \preffl b_v$.
\item For every $b \in \labels$ the block $\labeling(b)$ is either \condiv-safe
or \condiv-covered
by $\labeling(b')$ for some $b' \preffl b$.
\end{enumerate}
\end{theorem}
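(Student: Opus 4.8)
The plan is to derive this from \theoremword~\ref{lem:rearranged_countermodel_conditions} by observing that $\generated{\ccsymbol}$ rearranges $M$ via the prototype function $\prot((l, w)) = w$, which is legitimate because $(l, w) \in \Valcc(x)$ iff $w \in \Val(x)$ by the generation rule. It then suffices to show that hypotheses \condi--\condiv of \theoremword~\ref{lem:rearranged_countermodel_conditions} hold for $M' = \generated{\ccsymbol}$, with $\preff'$ instantiated as the strict relation $\preffcc$ of the generated model. I will obtain \condi, \condii, and \condiii from hypothesis (a), and \condiv from hypothesis (b).

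First I would handle \condi, \condii, and \condiii together, since the worlds collected in $\Fal$ are exactly the prototypes witnessing these three clauses: $\rep(W \setminus \tset{\varphi}{M})$ falsifies $\varphi$, each element of $\FalBox$ falsifies the corresponding $\beta$, and each element of $\FalOb$ lies in $\maximalSetM{\tset{\alpha}{M}} \setminus \tset{\gamma}{M}$. Hypothesis (a) places every such $v$ inside some block $\labeling(b_v)$, so $(b_v, v)$ is the required witness world with $\prot((b_v, v)) = v$; this immediately yields \condi and \condii. For \condiii the extra clause---that every $\preffcc$-predecessor $u'$ of $(b_v, v)$ satisfies $\prot(u') \preff v$---is where flatness and \iiisuitability enter. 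I would unfold $u' = (l, w) \preffcc (b_v, v)$ into the two generation cases: if $l = b_v$, then the pair can be strict only if $w \preffb v$ inside the block, which flatness forbids; if $l \neq b_v$, then strictness forces $l \preffl b_v$, so $\labeling(l)$ is \iiisuitable for $\labeling(b_v)$ and hence $w \in \Bet{\preff}{v}$, i.e. $\prot(u') = w \preff v$.

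Next I would establish \condiv from hypothesis (b). Fix $w' = (b, w)$ and $\Ob{\gamma}{\alpha} \in \PosOb$, and suppose some $u \preff w$ satisfies $\alpha$. Then $\alpha \in \SatForm{M}{\Bet{\preff}{w}} \cap \Cond{\varphi}$, using $u \in \Bet{\preff}{w}$ and $\alpha \in \Cond{\varphi}$. If $\labeling(b)$ is \ivsafe, its defining property supplies $w'' \preffb w$ inside the block with $M, w'' \models \alpha$, so $s' = (b, w'')$ is a $\preffcc$-predecessor of $w'$ with $M, \prot(s') \models \alpha$. If instead $\labeling(b)$ is \ivcovered by $\labeling(b')$ for some $b' \preffl b$, then $\alpha$ is satisfied at some $w'' \in \worlds(\labeling(b'))$, and $s' = (b', w'') \preffcc w'$---strict because $b'$ is a strictly more preferred block---again serves as the witness.

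The hard part, I expect, will be the strictness bookkeeping when translating $\preffcc$ back and forth through the generation rule. In each case above I need the combined relation $\preffcc$ to inherit strictness either from the within-block order (so that flatness genuinely excludes strict same-label predecessors in \condiii and, conversely, the internal order survives into $\preffcc$ in the \ivsafe case) or from a strictly more preferred label (so that $b' \preffl b$ really yields $(b', w'') \preffcc (b, w)$). Making these passages precise---in particular verifying that the strict $\preffcc$-predecessors of a world are exactly those coming from strictly more preferred labels together with strict within-block predecessors, which hinges on same-label comparisons reducing to $\preffeqb$---is the one genuinely delicate point; everything else is a direct matching of the witnesses furnished by (a) and (b) to the clauses of \theoremword~\ref{lem:rearranged_countermodel_conditions}.
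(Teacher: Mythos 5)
Your proposal is correct and takes exactly the paper's route: the paper's (two-line) proof likewise observes that $\generated{\ccsymbol}$ rearranges $M$ via $\prot((l,w)) = w$ and invokes \theoremword~\ref{lem:rearranged_countermodel_conditions}, with (a) yielding \condi--\condiii and (b) yielding \condiv, so your case analysis of strict $\preffcc$-predecessors merely fills in details the paper leaves implicit. The delicate point you flag at the end is genuine: same-label strictness in $\generated{\ccsymbol}$ reduces to $\preffb$ only when the label carries no $\preffeql$-self-loop (otherwise all worlds of that block become mutually $\preffeqcc$-related and the \ivsafe case loses its strict witness), a tacit assumption in the theorem's statement that holds in all of the paper's constructions but is spelled out neither there nor in your write-up.
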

\begin{proof}
$\generated{\ccsymbol}$ rearranges $M$ (with $\prot((l, w)) = w$), so we can apply \theoremword~\ref{lem:rearranged_countermodel_conditions}. (a) ensures conditions \condi-\condiii and (b) ensures condition \condiv.
\end{proof}


We now define composite constructions for each \AA qvist's logic satisfying the conditions from \theoremword~\ref{lem:composite_countermodel_conditions} and the model conditions for the logic from \figureword~\ref{tab:initial_taxonomy}.

\subsection{Small Model Construction for Logic \E}

In the case of logic \E there are no model conditions we need to satisfy in our countermodel, so we can use a preference relation that is non-transitive and contains cycles. In this simple case, all blocks of the countermodel construction will be antichains.

We start our composite construction with a dedicated one-world block $\antichain{M}{\{v\}}$ labeled $\fallabel{v}$ for each world $v \in \Fal$.
The simplest way to \condiv-cover such block with a \iiisuitable block containing linearly many (w.r.t. $|\varphi|$) worlds is to go through formulas from $\Cond{\varphi}$ satisfied by some world in $\Bet{\preff}{v}$ and select one representative for each. Below such selection is defined more generally, for an arbitrary set of formulas $\CondSet$ and an arbitrary set of worlds $U$ to select from.

\begin{definition}{\bf (Selection)}
For a set $U \subseteq \worlds(M)$ and a set of formulas $\CondSet$, $\selection{M}{U}{\CondSet} = \{ \rep(\tset{\alpha}{M} \cap U) \mid \alpha \in \CondSet, \tset{\alpha}{M} \cap U \neq \emptyset \}$.
\end{definition}

We can show that such a selection can \condiv-cover not only single-world blocks like $\fallabel{v}$, but any block $B$ as long as $U$ contains all worlds $\preff$-preferable to some world in $B$.

\begin{lemma}
\label{lem:cond_iv_for_selection}
If $\Bet{\preff}{w} \subseteq U$ for all $w \in \worlds(B)$ in some block $B$ then $\antichain{M}{\selection{M}{U}{\Cond{\varphi}}}$ \condiv-covers $B$.
\end{lemma}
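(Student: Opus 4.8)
The plan is to unfold the definition of \condiv-covering and reduce the claim to a single membership chase through the definitions. By definition, $\antichain{M}{\selection{M}{U}{\Cond{\varphi}}}$ \condiv-covers $B$ precisely when, for every $w \in \worlds(B)$, we have $\SatForm{M}{\Bet{\preff}{w}} \cap \Cond{\varphi} \subseteq \SatForm{M}{\selection{M}{U}{\Cond{\varphi}}}$, using that the set of worlds of an antichain block is exactly its underlying set, so $\worlds(\antichain{M}{\selection{M}{U}{\Cond{\varphi}}}) = \selection{M}{U}{\Cond{\varphi}}$. I would therefore fix an arbitrary $w \in \worlds(B)$ together with an arbitrary formula $\alpha \in \SatForm{M}{\Bet{\preff}{w}} \cap \Cond{\varphi}$, and aim to exhibit a world in the selection that satisfies $\alpha$.

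First I would use $\alpha \in \SatForm{M}{\Bet{\preff}{w}}$ to obtain a witness $u_0 \in \Bet{\preff}{w}$ with $M, u_0 \models \alpha$. The hypothesis $\Bet{\preff}{w} \subseteq U$ (which applies since $w \in \worlds(B)$) then places $u_0$ in $U$, so in particular $\tset{\alpha}{M} \cap U \neq \emptyset$. Combined with $\alpha \in \Cond{\varphi}$, the defining clause of $\selection{M}{U}{\Cond{\varphi}}$ guarantees that the representative $\rep(\tset{\alpha}{M} \cap U)$ belongs to the selection. Since $\rep$ always returns an element of its non-empty argument, this representative lies in $\tset{\alpha}{M}$ and hence satisfies $\alpha$, which witnesses $\alpha \in \SatForm{M}{\selection{M}{U}{\Cond{\varphi}}}$. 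As $w$ and $\alpha$ were arbitrary, this establishes the required inclusion for every $w \in \worlds(B)$, i.e. the \condiv-covering.

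I do not expect a genuine obstacle: the argument is a one-step definitional chase, and the only subtlety to track is that the selection is built relative to $U$ rather than to $\Bet{\preff}{w}$ itself, which is exactly the place where the inclusion hypothesis $\Bet{\preff}{w} \subseteq U$ is consumed. The reason to state the lemma at this generality — an arbitrary block $B$ and an arbitrary $U$ containing all strictly-better worlds, rather than a fixed single-world block — is that it will be reused both for the falsifying blocks $\fallabel{v}$ and for the larger chain and clique blocks in the constructions for the stronger logics; accordingly I would phrase the membership argument uniformly in $w \in \worlds(B)$ rather than specialising to the single-world case.
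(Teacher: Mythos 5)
Your proof is correct and follows essentially the same route as the paper's: the paper's one-line argument is precisely your membership chase (a witness for $\alpha$ in $\Bet{\preff}{w}$ lies in $U$ by the hypothesis, so $\tset{\alpha}{M} \cap U \neq \emptyset$ and the representative $\rep(\tset{\alpha}{M} \cap U)$ in the selection satisfies $\alpha$), which you merely spell out in full detail. No gaps.
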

\begin{proof}
If $\alpha \in \Cond{\varphi}$ is satisfiable in $\Bet{\preff}{w}$ for some $w \in \worlds(B)$ then there will be a representative satisfying $\alpha$ in $\selection{M}{U}{\Cond{\varphi}}$.
\end{proof}

A block to \condiv-cover $\fallabel{v}$, which we will call \emph{orbit} and label $\orblabel{v}$, can be defined as $\Orbit{M}{\varphi}{v} = \antichain{M}{\selection{M}{\Bet{\preff}{v}}{\Cond{\varphi}}}$. To \condiv-cover orbits themselves, we can make another selection, this time from the whole $\worlds(M)$, as this block does not need to be \iiisuitable. So the block $\AllOrbit{M}{\varphi} = \antichain{M}{\selection{M}{\worlds(M)}{\Cond{\varphi}}}$, which we will label $\allorblabel{1}$, can be used to \condiv-cover all orbits. Finally, to \condiv-cover $\allorblabel{1}$ we can add two more copies of $\AllOrbit{M}{\varphi}$ (labeled $\allorblabel{2}$ and $\allorblabel{3}$) and have a non-transitive loop on these three copies, which will $\condiv$-cover each other circularly. This leads to the following small model construction for \E.

\begin{definition}{\bf(Small Model Construction for \E)}\\
If $M$ is an \mbox{\E-countermodel} for $\varphi$, $\smc{\E} = \tuple{L, \preffeql, \labeling}$ where ${L = \{\fallabel{v}, \orblabel{v} \mid v \in \Fal \}} \cup \{ \allorblabel{i} \mid i \in \{1,2,3\} \}$, $\labeling(\fallabel{v}) = \antichain{M}{\{v\}}$, $\labeling(\orblabel{v}) = \Orbit{M}{\varphi}{v}$, $\labeling(\allorblabel{i}) = \AllOrbit{M}{\varphi}$ and the preference relation $\preffeql$ on blocks is demonstrated on \figureword~\ref{fig:small_model_constructions}a.
\end{definition}

\begin{theorem}
If $M$ is a \mbox{\E-countermodel} for $\varphi$ then $\generated{\smc{\E}}$ is a \mbox{\E-countermodel} for $\varphi$ and $|\worlds(\generated{\smc{\E}})| = \O(|\varphi|^2)$.\footnote{As usual, the notation $f(\varphi, M) = \O(g(\varphi, M))$ for integer-valued functions $f$ and $g$ means that there exists a constant $C$ such that $f(\varphi, M) \le C \cdot g(\varphi, M)$ for all $\varphi$ and $M$.}
\end{theorem}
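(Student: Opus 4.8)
The plan is to obtain the countermodel property directly from \theoremword~\ref{lem:composite_countermodel_conditions} and then bound the number of worlds by a direct count. Since logic \E imposes no conditions on the preference relation (\figureword~\ref{tab:initial_taxonomy}), \emph{every} countermodel for $\varphi$ is automatically an \E-countermodel; hence it suffices to show $\generated{\smc{\E}} \not\models \varphi$, for which I only need to verify conditions (a) and (b) of \theoremword~\ref{lem:composite_countermodel_conditions} for the construction depicted in \figureword~\ref{fig:small_model_constructions}a.

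For condition (a), fix $v \in \Fal$. The block $\labeling(\fallabel{v}) = \antichain{M}{\{v\}}$ contains $v$ and is flat, since an antichain carries the empty relation. Its only $\preffl$-predecessor is $\orblabel{v}$, and $\worlds(\labeling(\orblabel{v})) = \selection{M}{\Bet{\preff}{v}}{\Cond{\varphi}} \subseteq \Bet{\preff}{v}$ by the definition of selection; thus $\labeling(\orblabel{v})$ is \iiisuitable for $\labeling(\fallabel{v})$, as required. For condition (b), I would show each block is \ivcovered by a $\preffl$-preferred block, invoking \lemmaword~\ref{lem:cond_iv_for_selection} in every case. Taking $U = \Bet{\preff}{v}$ shows $\orblabel{v}$ \condiv-covers $\fallabel{v}$; taking $U = \worlds(M)$ shows that each copy of $\AllOrbit{M}{\varphi}$ \condiv-covers any $\orblabel{v}$ (in particular $\allorblabel{1}$ covers every orbit) and also \condiv-covers every other cloud block.

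The one subtlety — and the real point of the construction — is covering the cloud blocks themselves without an infinite regress. Here the non-transitive $3$-cycle $\allorblabel{1} \preffl \allorblabel{3} \preffl \allorblabel{2} \preffl \allorblabel{1}$ does the work: each $\allorblabel{i}$ has a $\preffl$-preferred copy of $\AllOrbit{M}{\varphi}$ below it, so by \lemmaword~\ref{lem:cond_iv_for_selection} (with $U = \worlds(M)$) condition (b) closes off after three blocks instead of recursing forever. This is exactly where the absence of any frame condition on $\preffeq$ in \E is used, since such a cycle is permitted. With (a) and (b) established, \theoremword~\ref{lem:composite_countermodel_conditions} gives $\generated{\smc{\E}} \not\models \varphi$, hence an \E-countermodel. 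I expect this circular covering to be the only nontrivial idea; everything else is a mechanical discharge of the two conditions against the definitions.

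For the size bound I would simply sum the block sizes. There are $|\Fal| = \O(|\varphi|)$ falsifying worlds, since $\Fal$ contributes one representative for $\varphi$, one per formula in $\NegBox$, and one per formula in $\NegOb$, and both $\NegBox$ and $\NegOb$ are subsets of $\SubF{\varphi}$. Each $\fallabel{v}$ has a single world, each $\orblabel{v}$ has at most $|\Cond{\varphi}| = \O(|\varphi|)$ worlds (one representative per condition), and the three cloud blocks have $\O(|\varphi|)$ worlds each. The dominant contribution is the orbits: $\O(|\varphi|)$ blocks of $\O(|\varphi|)$ worlds, giving $\O(|\varphi|^2)$, while the rest is $\O(|\varphi|)$. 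Hence $|\worlds(\generated{\smc{\E}})| = \O(|\varphi|^2)$, completing the proof.
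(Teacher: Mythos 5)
Your proposal is correct and follows the paper's own proof essentially verbatim: both establish $\generated{\smc{\E}} \not\models \varphi$ via \theoremword~\ref{lem:composite_countermodel_conditions}, checking \iiisuitable{}ness of $\orblabel{v}$ for $\fallabel{v}$ and discharging condition (b) through \lemmaword~\ref{lem:cond_iv_for_selection} (with the non-transitive three-cycle of cloud blocks closing off the covering, exactly as the paper sets up before the theorem), and both obtain the size bound from $\O(|\Fal|)$ blocks of at most $|\Cond{\varphi}|$ worlds each. Your write-up merely unpacks details the paper leaves implicit, such as the choices of $U$ in each application of the lemma and the observation that any countermodel is automatically an \E-countermodel.
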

\begin{proof}
$\smc{\E}$ is a countermodel for $\varphi$ by \theoremword~\ref{lem:composite_countermodel_conditions}, because $\labeling(\orblabel{v})$ is \iiisuitable for $\labeling(\fallabel{v})$ and all blocks are $\condiv$-covered by \lemmaword~\ref{lem:cond_iv_for_selection}.
$|\worlds(\generated{\smc{\E}})| =
\O(|\varphi|^2)$ since $\smc{\E}$ contains ${(2 \cdot |\Fal| + 3)
}$ blocks with at most $|\Cond{\varphi}|$ worlds each.
\end{proof}

\subsection{Small Model Construction for Logic \F}

For logic \F, we will utilize the limitedness of the countermodel $M$ to construct a small countermodel with an acyclic $\preff$, which will automatically make it limited (and thus an \F-countermodel) too. 

\pagebreak

\begin{lemma}
\label{lem:acyclic_limited}
Model $\tuple{W, \preffeq, \Val}$ is limited if $W$ is finite and $\preff$ is acyclic.
\end{lemma}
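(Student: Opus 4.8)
The plan is to unfold the definition of limitedness and reduce the statement to an elementary fact about finite sets equipped with an acyclic relation. Fix a formula $\alpha$ with $\tset{\alpha}{M} \neq \emptyset$ and set $U = \tset{\alpha}{M}$. Since $W$ is finite, $U$ is a nonempty finite subset of $W$, and the goal $\maximalSetM{U} \neq \emptyset$ is exactly the claim that $U$ contains a world with no $\preff$-strictly-better world inside $U$. Thus it suffices to prove the following self-contained claim: \emph{every nonempty finite $U \subseteq W$ has a $\preff$-maximal element whenever $\preff$ is acyclic.} I would first record that $\preff$ is automatically irreflexive (since $w \preff w$ would require both $w \preffeq w$ and $w \npreffeq w$), so a cycle in the sense ruled out by acyclicity is any sequence $w_1 \preff w_2 \preff \cdots \preff w_k \preff w_1$ with $k \geq 1$.

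To prove the claim I would argue by a longest-chain (equivalently, maximal-element) argument. Consider the $\preff$-increasing chains contained in $U$, i.e.\ sequences $w_1 \preff w_2 \preff \cdots \preff w_m$ with all $w_i \in U$. Acyclicity forbids repetitions within such a chain --- a repeated world would close a cycle --- so every chain has length at most $|U|$, and since $U$ is nonempty at least one chain (of length one) exists. Pick a chain of maximal length and let $v = w_m$ be its top element. I claim $v \in \maximalSetM{U}$: if some $u \in U$ satisfied $u \preff v$, then $u$ cannot coincide with any $w_j$ in the chain (otherwise $w_j \preff w_{j+1} \preff \cdots \preff w_m \preff w_j$ would be a cycle), so appending $u$ yields a strictly longer chain, contradicting maximality. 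Hence no $u \in U$ is strictly better than $v$, i.e.\ $v \in \maximalSetM{U}$, which is therefore nonempty.

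I do not expect a genuine obstacle here: the content is a routine finite-combinatorics fact, and the only points demanding care are getting the direction of $\preff$ right (a maximal world is one that nothing in $U$ beats) and invoking acyclicity precisely where it is needed, namely to exclude chain repetitions and thereby bound chain lengths. I would also flag the essential use of finiteness, since the statement fails for infinite $W$ (e.g.\ $\mathbb{Z}$ ordered by $<$ is acyclic yet has no maximal element); an alternative phrasing that builds a would-be infinite $\preff$-ascending sequence and derives a repetition from the pigeonhole principle works equally well and makes the role of finiteness transparent.
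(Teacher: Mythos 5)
Your proposal is in essence the paper's own proof: both arguments take a longest $\preff$-path through $\tset{\alpha}{M}$, use acyclicity to forbid repetitions (hence finiteness of $W$ bounds the length, so a longest path exists), and argue that the extremal endpoint lies in $\maximalSetM{\tset{\alpha}{M}}$ because otherwise the path could be extended. Your supporting remarks are also sound: $\preff$ is irreflexive directly from its definition ($w \preff w$ would require $w \npreffeq w$), and finiteness is genuinely needed, as $(\mathbb{Z}, <)$ shows.

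However, as literally written your chain is oriented backwards, and this breaks two steps. Under the paper's convention, $a \preff b$ means $a$ is \emph{strictly better} than $b$ (and maximality of $v$ means no $u$ with $u \preff v$), so in your chain $w_1 \preff w_2 \preff \cdots \preff w_m$ preference \emph{decreases} along the written order: $w_1$ is the best element and $w_m$ the worst. Consequently: (i) the world that chain-maximality forces to be $\preff$-maximal is $w_1$, not your chosen $v = w_m$; (ii) the hypothesis $u \preff v$ with $u = w_j$ yields the edge $w_j \preff w_m$, which together with the chain edges $w_j \preff w_{j+1}, \dots, w_{m-1} \preff w_m$ produces no cycle (all edges point the same way), so your displayed cycle $w_j \preff w_{j+1} \preff \cdots \preff w_m \preff w_j$ silently uses the unavailable edge $w_m \preff w_j$; and (iii) ``appending $u$'' after $w_m$ would require $w_m \preff u$, whereas you only have $u \preff w_m$. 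The repair is purely local and leaves the structure of the argument untouched: either build the chain with each successive world strictly better, i.e.\ $w_{i+1} \preff w_i$ (the paper writes this as $w_0 \prec w_1 \prec w_2 \prec \dots$), keep $v = w_m$, and close the cycle as $w_j \preff w_m \preff w_{m-1} \preff \cdots \preff w_j$; or keep your chain as written and take $v = w_1$, prepending $u$ instead of appending. It is worth noting that you explicitly flagged ``getting the direction of $\preff$ right'' as the one delicate point --- and that is exactly where the slip occurs; once the orientation is fixed, your proof and the paper's coincide.
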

\begin{proof}
If there is some $w_0 \in \tset{\alpha}{M}$, consider (any) longest path ${ w_0 \prec w_1 \prec w_2 \prec \dots }$ with worlds from $\tset{\alpha}{M}$ staring from $w_0$. Since $W$ is finite and there can be no repetitions on the path (due to acyclicity of $\preff$), the path is finite and there is the last world $w_m$ for which there is no $u \in \tset{\alpha}{M}$ such that $u \preff w_m$, and so $w_m \in \maximalSet{\preff}{\tset{\alpha}{M}}$ by definition.
\end{proof}

For acyclicity, we will modify our construction $\smc{\E}$ by replacing a non-transitive cycle on blocks $\allorblabel{1},\allorblabel{2},\allorblabel{3}$ with one finite chain.
Our goal is to select a chain of polynomial size that is \condiv-safe and satisfies any $\alpha \in \Cond{\varphi}$ that is satisfiable in $M$ (which will allow us to use the chain to \condiv-cover any block). We construct such a chain through an iterative process,
that selects maximal worlds for disjunctions of conditions.
At the beginning of the process, we have $\CondSet_0 = \Cond{\alpha}$ as the set of conditions for which we need satisfying worlds. If at least one of conditions in $\CondSet_0$ is satisfied by some world in $M$, then $\tset{\bigvee_{\alpha \in \CondSet_0} \alpha}{M} \neq \emptyset$, then by limitedness there exists some ${z_0 \in \maximalSet{\preff}{\tset{\bigvee_{\alpha \in \CondSet_0} \alpha}{M}}}$. We can safely take $z_0$ as the first (i.e. most preferable) world in the chain, since there are no worlds $u \preff z_0$ in $M$ satisfying conditions from $\CondSet_0$. $z_0$ satisfies some conditions from $\CondSet_0$ (since $M, z_0 \models \bigvee_{\alpha \in \CondSet_0} \alpha$), therefore we can move on to the next step with a strictly smaller set $\CondSet_1$ of conditions for which we still need satisfying worlds. We can safely repeat this process by taking worlds from $z_i \in \maximalSet{\preff}{\tset{\bigvee_{\alpha \in \CondSet_i} \alpha}{M}}$ at every iteration: $z_i$ has no $\preff$-prefferable $\alpha$-worlds for all remaining conditions $\alpha \in \CondSet_i$, while for all already removed formulas there is a satisfying world somewhere earlier (i.e. preferrable to $z_i$) in the chain, thus condition~\condiv will be satisfied for this world. 
After a linear number of iterations, the chain will contain satisfying worlds for all formulas from $\CondSet$ satisfiable in $M$.

Below is the formal definition of the described chain of maximal worlds. We give a generalized version that selects maximal worlds from any given subset of worlds $U$ and any given set of formulas $\CondSet_i$, the same way as we did for $\selection{M}{U}{\CondSet}$. We will need this generalized version for logics \FCM and \G.  To define linear order in chains formally we will use the notation of lists: $\emptyseq$ will denote an empty list, and $\apseq{a}{S}$ will denote the list in which element $a$ is appended to the beginning of the list $S$.

\begin{definition}
For any $U \subseteq \worlds(M)$ and a finite set of formulas $\CondSet_i$,
\[ \maxseq{M}{U}{\CondSet_i} =
\begin{cases}
\emptyseq, & \textit{if }\CondSet_i = \emptyset \\
\emptyseq, & \textit{if }\disjset{M}{U}{\CondSet_i} = \emptyset \\
\apseq{\disjbest{M}{U}{\CondSet_i}}{\maxseq{M}{U}{\CondSet_{i+1}}}, & \textit{otherwise}
\end{cases} \]
\noindent where $\disjset{M}{U}{\CondSet} = U \cap \maximalSet{\preff}{\tset{\bigvee_{\alpha \in \CondSet_i} \alpha}{M}}$, $\disjbest{M}{U}{\CondSet} = \rep(\disjset{M}{U}{\CondSet})$ and ${\CondSet_{i+1} = \{ \alpha \in \CondSet_i \mid M, \disjbest{M}{U}{\CondSet} \not\models \alpha \}}$.
\end{definition}

Notice that for a finite $\CondSet_0$ this sequence is well-defined (representative $\disjbest{M}{U}{\CondSet_i}$ is always taken from a non-empty set and $|\CondSet_i|$ decreases) and always has length at most $|\CondSet_0|$. The reasoning above that shows \condiv-safeness of the chain built from this sequence works in the general case with arbitrary $U$ and does not even require the limitedness of $M$.

\begin{lemma}
\label{lem:chain_iv_safe}
$\chain{M}{\maxseq{M}{U}{\Cond{\varphi}}}$ is \condiv-safe for any $U \subseteq \worlds(M)$.
\end{lemma}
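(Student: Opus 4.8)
The plan is to unfold the recursive definition of $\maxseq{M}{U}{\Cond{\varphi}}$ into a finite list $\seq{z_0, \dots, z_m}$ together with the accompanying nested chain of condition sets $\Cond{\varphi} = \CondSet_0 \supseteq \CondSet_1 \supseteq \dots \supseteq \CondSet_{m+1}$, where $z_i = \disjbest{M}{U}{\CondSet_i}$ and $\CondSet_{i+1} = \{ \alpha \in \CondSet_i \mid M, z_i \not\models \alpha \}$. In the resulting chain block $z_i$ is strictly more preferable than $z_j$ exactly when $i < j$, so \condiv-safeness for a world $z_k$ asks, for each $\alpha \in \Cond{\varphi}$ satisfiable strictly below $z_k$ in $M$, to exhibit an \emph{earlier} chain world $z_j$ (with $j < k$) such that $M, z_j \models \alpha$. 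Two facts read off the definition will drive the argument: (i) each $z_i$ lies in $\maximalSet{\preff}{\tset{\bigvee_{\alpha \in \CondSet_i} \alpha}{M}}$, and (ii) a condition leaves the set at step $i$ precisely because $M, z_i \models \alpha$.

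The key step is the following consequence of (i): for every $\alpha \in \CondSet_k$, no world $u \preff z_k$ of $M$ satisfies $\alpha$, i.e.\ $\Bet{\preff}{z_k} \cap \tset{\alpha}{M} = \emptyset$. Indeed, such a $u$ would satisfy the disjunction $\bigvee_{\beta \in \CondSet_k} \beta$ while sitting strictly below $z_k$, contradicting the maximality of $z_k$ in $\tset{\bigvee_{\beta \in \CondSet_k} \beta}{M}$. This reduction from the aggregate maximality of the disjunction to a per-formula non-satisfiability statement is the one place requiring care, and it is where I expect the only real subtlety to lie; once it is made explicit, the remainder is a short contradiction plus a counting argument on the nested $\CondSet_i$.

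With the key step in hand, fix a chain world $z_k$ and some $\alpha \in \SatForm{M}{\Bet{\preff}{z_k}} \cap \Cond{\varphi}$, so that $\alpha \in \CondSet_0$ and some $u \preff z_k$ has $M, u \models \alpha$. By the key step $\alpha \notin \CondSet_k$, hence (since the $\CondSet_i$ are decreasing and $\alpha \in \CondSet_0 \setminus \CondSet_k$) there is a largest index $j < k$ with $\alpha \in \CondSet_j$ and $\alpha \notin \CondSet_{j+1}$; by (ii) this forces $M, z_j \models \alpha$, and $z_j$, being earlier than $z_k$, is the required witness. The degenerate cases where the list is empty or terminates early are vacuous. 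This establishes \condiv-safeness for every chain world. Note that the argument never invokes limitedness of $M$ and places no constraint on $U$, which is exactly why the lemma holds for arbitrary $U \subseteq \worlds(M)$ and will be reusable for \FCM and \G.
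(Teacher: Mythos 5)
Your proof is correct and follows essentially the same route as the paper's: you use the $\preff$-maximality of $z_k$ in $\tset{\bigvee_{\alpha \in \CondSet_k} \alpha}{M}$ to conclude that any $\alpha \in \Cond{\varphi}$ satisfiable in $\Bet{\preff}{z_k}$ must already have been removed at some step $j < k$, whose witness $z_j$ is strictly earlier in the chain. The only difference is presentational --- you make explicit the reduction from maximality w.r.t.\ the disjunction to per-formula emptiness of $\Bet{\preff}{z_k} \cap \tset{\alpha}{M}$, which the paper's proof leaves implicit in the phrase ``due to maximality of $z_k$.''
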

\begin{proof}
Let $z_k \in \worlds(\chain{M}{\maxseq{M}{U}{\Cond{\varphi}}})$ and $\Ob{\gamma}{\alpha} \in \PosOb$. $z_k \in \maximalSet{\preff}{\tset{\bigvee_{\alpha \in \CondSet_k} \alpha}{M}}$ for some step $k$ and set $\CondSet_k$ of remaining conditions. If there is $u \preff z_k$ such that $M, u \models \alpha$,
then $\alpha \not\in \CondSet_k$ due to maximality of $z_k$, which means that $\alpha$ was removed at some previous step, therefore there is $z_j$ with $j < k$ such that $M, z_j \models \alpha$.
\end{proof}

For logic \F, we select worlds in the chain from the whole $\worlds(M)$: for a limited model $M$ we define block $\MaxChain{M}{\varphi} = \chain{M}{\maxseq{M}{\Cond{\varphi}}{\worlds(M)}}$, which we will label $\maxchainlabel$. $\MaxChain{M}{\varphi}$ contains satisfying worlds for all conditions from $\Cond{\varphi}$ satisfiable in $M$ so it \condiv-covers any block on $M$.

\begin{lemma}
\label{lem:chain_iv_covers_F}
For a limited $M$, $MaxChain(M, \varphi)$ \condiv-covers any block. 
\end{lemma}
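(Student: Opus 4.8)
The plan is to reduce the whole statement to a single key claim, namely that for a limited $M$ the chain $\MaxChain{M}{\varphi}$ contains a satisfying world for \emph{every} $\alpha \in \Cond{\varphi}$ with $\tset{\alpha}{M} \neq \emptyset$. Granting this claim, $\condiv$-covering follows immediately: fix any block $B$, any $w \in \worlds(B)$, and any $\alpha \in \SatForm{M}{\Bet{\preff}{w}} \cap \Cond{\varphi}$. The membership $\alpha \in \SatForm{M}{\Bet{\preff}{w}}$ gives a witness $u \preff w$ with $M, u \models \alpha$, so in particular $\tset{\alpha}{M} \neq \emptyset$. The key claim then yields a world of the chain satisfying $\alpha$, i.e. $\alpha \in \SatForm{M}{\worlds(\MaxChain{M}{\varphi})}$. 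This establishes the inclusion $\SatForm{M}{\Bet{\preff}{w}} \cap \Cond{\varphi} \subseteq \SatForm{M}{\worlds(\MaxChain{M}{\varphi})}$ required for $\condiv$-covering, and since $w$ was arbitrary this holds for the whole block $B$.

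To prove the key claim, I would track the decreasing sequence of remaining conditions $\CondSet_0 = \Cond{\varphi}, \CondSet_1, \CondSet_2, \dots$ generated by the recursion defining $\maxseq{M}{\worlds(M)}{\Cond{\varphi}}$. By construction, each appended world $\disjbest{M}{\worlds(M)}{\CondSet_i}$ is removed-from-$\CondSet_i$ exactly the conditions it satisfies, so $\CondSet_{i+1} = \{ \alpha \in \CondSet_i \mid M, \disjbest{M}{\worlds(M)}{\CondSet_i} \not\models \alpha \}$. Consequently, a condition $\alpha$ fails to acquire a satisfying world in the chain only if it survives in every $\CondSet_i$ up to the step at which the recursion terminates, i.e. returns $\emptyseq$. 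There are exactly two terminating base cases: either $\CondSet_i = \emptyset$, in which case no condition survives at all, or $\disjset{M}{\worlds(M)}{\CondSet_i} = \emptyset$.

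The crux, and the main obstacle, is ruling out that a satisfiable condition survives into the second base case; this is precisely where limitedness is used. Since we select from $U = \worlds(M)$, we have $\disjset{M}{\worlds(M)}{\CondSet_i} = \maximalSet{\preff}{\tset{\bigvee_{\alpha \in \CondSet_i} \alpha}{M}}$, and by limitedness of $M$ this set is empty if and only if $\tset{\bigvee_{\alpha \in \CondSet_i} \alpha}{M} = \emptyset$, that is, if and only if no remaining condition of $\CondSet_i$ is satisfiable anywhere in $M$. Hence any $\alpha$ surviving to a second-base-case termination must have $\tset{\alpha}{M} = \emptyset$. Contrapositively, every $\alpha \in \Cond{\varphi}$ with $\tset{\alpha}{M} \neq \emptyset$ is eventually removed at some step and is therefore satisfied by the corresponding world of the chain, which proves the key claim. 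All remaining ingredients — the well-foundedness of the recursion (each step strictly shrinks $\CondSet_i$, since the appended maximal world satisfies the disjunction and thus at least one remaining condition) and the elementary passage from ``satisfiable in $M$'' to ``covers $\Bet{\preff}{w}$'' — are routine, so the limitedness argument against premature termination is the only genuinely delicate point.
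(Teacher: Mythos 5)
Your proposal is correct and takes essentially the same approach as the paper: the paper's one-sentence proof is exactly your argument that a satisfiable $\alpha \in \Cond{\varphi}$ cannot survive to termination of the recursion, since limitedness forces $\disjset{M}{\worlds(M)}{\CondSet_i} = \maximalSet{\preff}{\tset{\bigvee_{\alpha \in \CondSet_i} \alpha}{M}} \neq \emptyset$ whenever $\CondSet_i$ contains a satisfiable condition, so $\alpha$ must be removed at some step and hence is satisfied by some chain world. Your write-up merely makes explicit the reduction to this key claim and the routine passage from satisfiability in $M$ to the covering inclusion.
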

\begin{proof}
If a condition $\alpha \in \Cond{\varphi}$ is satisfiable in $M$ then it can not be among the remaining conditions when the chain is built (otherwise $\CondSet_m \neq \emptyset$ and $\disjset{M}{\worlds(M)}{\CondSet_m} \neq \emptyset$ due to limitedness of $M$), therefore for some world $z$ in the chain $M, z \models \alpha$.
\end{proof}

Replacement of non-transitive triangle in $\smc{E}$ with $\MaxChain{M}{\varphi}$ gives us the small model construction $\smc{F}$ with an acyclic strict version of preference relation.

\begin{definition}{\bf(Small Model Construction for \F)} If $M$ is an \mbox{\F-countermodel} for $\varphi$, $\smc{\F} = \tuple{L, \preffeql, \labeling}$ where ${L = \{\fallabel{v}, \orblabel{v} \mid v \in \Fal \}} \cup \{ \maxchainlabel \}$, $\labeling(\fallabel{v}) = \antichain{M}{\{v\}}$, $\labeling(\orblabel{v}) = \Orbit{M}{\varphi}{\Bet{\preff}{v}}$, $\labeling(\maxchainlabel) = \MaxChain{M}{\varphi}$ and a preference relation $\preffeql$ on blocks is demonstrated on \figureword~\ref{fig:small_model_constructions}b.
\end{definition}

\begin{theorem}
If $M$ is a \mbox{\F-countermodel} for $\varphi$ then $\generated{\smc{\F}}$ is a \mbox{\F-countermodel} for $\varphi$ and $|\worlds(\generated{\smc{\F}})| = \O(|\varphi|^2)$.
\end{theorem}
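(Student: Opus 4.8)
The plan is to mirror the \E case: first derive $\generated{\smc{\F}} \not\models \varphi$ from \theoremword~\ref{lem:composite_countermodel_conditions}, and then upgrade ``countermodel'' to ``\F-countermodel'' by showing that $\generated{\smc{\F}}$ is limited. For the first part I would verify conditions (a) and (b) of \theoremword~\ref{lem:composite_countermodel_conditions}. For (a), each $v \in \Fal$ lies in the one-world antichain $\fallabel{v}$, which is trivially flat, and the only block $\preffl$-preferred to $\fallabel{v}$ is $\orblabel{v} = \Orbit{M}{\varphi}{\Bet{\preff}{v}}$; since its worlds are drawn from $\Bet{\preff}{v}$, it is \iiisuitable for $\fallabel{v}$. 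Here I would stress that the chain $\maxchainlabel$, although $\preffl$-preferred to every $\orblabel{v}$, is \emph{not} $\preffl$-preferred to $\fallabel{v}$, because $\preffl$ is deliberately non-transitive; hence the chain need not be (and is not) \iiisuitable for the fall blocks. For (b), $\fallabel{v}$ is \ivcovered by $\orblabel{v}$ via \lemmaword~\ref{lem:cond_iv_for_selection} with $U = \Bet{\preff}{v}$, each $\orblabel{v}$ is \ivcovered by $\maxchainlabel = \MaxChain{M}{\varphi}$ via \lemmaword~\ref{lem:chain_iv_covers_F} (this is where limitedness of $M$ enters), and $\maxchainlabel$ itself is \ivsafe by \lemmaword~\ref{lem:chain_iv_safe}. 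This yields $\generated{\smc{\F}} \not\models \varphi$.

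For the second part I would invoke \lemmaword~\ref{lem:acyclic_limited}: it suffices to show $\generated{\smc{\F}}$ is finite (immediate from the size bound below) and that its strict preference relation $\preffcc$ is acyclic. The label relation $\preffl$ is itself acyclic, layering $\maxchainlabel$ above the orbits above the fall blocks with no backward edges. Unfolding the definition of the generated order, any $\preffcc$-step either stays inside one block --- where the internal strict relation is empty (antichains) or a strict linear order (the chain), both acyclic --- or passes to a strictly $\preffl$-smaller label. Since $\preffl$ is a strict order on finitely many labels, a cycle can never leave and re-enter a label, so any hypothetical $\preffcc$-cycle would have to live inside a single block, which is impossible. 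Hence $\preffcc$ is acyclic and, by \lemmaword~\ref{lem:acyclic_limited}, $\generated{\smc{\F}}$ is limited, making it an \F-countermodel.

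I expect the acyclicity verification to be the main obstacle: the whole point of $\smc{\F}$ is that replacing the three-cycle of \E by a single chain eliminates cycles, yet since $\preffl$ is intentionally non-transitive one must argue carefully that non-transitivity does not secretly reintroduce a cycle in the flattened relation $\preffcc$ (and, dually, that this same non-transitivity is exactly what exempts the chain from \iiisuitability toward the fall blocks in condition (a)). Once acyclicity is established, limitedness --- and thus the \F-model condition --- comes for free from \lemmaword~\ref{lem:acyclic_limited}. The size bound is then routine: $\smc{\F}$ consists of $2 \cdot |\Fal| + 1$ blocks (a $\fallabel{v}$ and an $\orblabel{v}$ for each $v \in \Fal$, plus $\maxchainlabel$), with $|\Fal| = \O(|\varphi|)$ and at most $|\Cond{\varphi}| = \O(|\varphi|)$ worlds per block, yielding $|\worlds(\generated{\smc{\F}})| = \O(|\varphi|^2)$.
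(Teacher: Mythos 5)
Your proof is correct and follows essentially the same route as the paper's: it discharges conditions (a) and (b) of \theoremword~\ref{lem:composite_countermodel_conditions} with exactly the lemma assignments the paper uses (\lemmaword~\ref{lem:cond_iv_for_selection} for the $\fallabel{v}$ blocks, \lemmaword~\ref{lem:chain_iv_covers_F} for the orbits, \lemmaword~\ref{lem:chain_iv_safe} for the chain), then obtains limitedness via \lemmaword~\ref{lem:acyclic_limited} and counts $2 \cdot |\Fal| + 1$ blocks of at most $|\Cond{\varphi}|$ worlds each. Your explicit verification that the strict relation $\preffcc$ of the generated model is acyclic (and that the deliberate non-transitivity of $\preffeql$ exempts the chain from \condiii-suitability toward the $\fallabel{v}$ blocks) is a detail the paper leaves implicit, not a divergence.
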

\begin{proof}
$\generated{\smc{\F}}$ is a countermodel for $\varphi$ by \theoremword~\ref{lem:composite_countermodel_conditions}, because $\labeling(\orblabel{v})$ is \iiisuitable for $\labeling(\fallabel{v})$, $\labeling(\maxchainlabel)$ is $\condiv$-safe by \lemmaword~\ref{lem:chain_iv_safe} and all other blocks are $\condiv$-covered by \lemmaword~\ref{lem:chain_iv_covers_F} and \lemmaword~\ref{lem:cond_iv_for_selection}.
$\generated{\smc{\F}}$ is an $\F$-countermodel by \lemmaword~\ref{lem:acyclic_limited}.
$|\worlds(\generated{\smc{\F}})| =
\O(|\varphi|^2)$ since $\smc{\F}$ contains ${(2 \cdot |\Fal| + 1)}$ blocks with at most $|\Cond{\varphi}|$ worlds each.
\end{proof}

\subsection{Small Model Construction for Logic \FCM}

For logic \FCM we need to ensure the transitivity of the preference relation in $\generated{\smc{\FCM}}$. It is enough to obtain an \FCM-countermodel since for finite models transitivity implies smoothness.

\pagebreak

\begin{lemma}
\label{lem:transitive-smooth}
$M = \tuple{W, \preffeq, \Val}$ is smooth if $W$ is finite and $\preffeq$ is transitive.
\end{lemma}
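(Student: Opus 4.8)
The plan is to adapt the argument behind \lemmaword~\ref{lem:acyclic_limited}, but one must account for the fact that smoothness is strictly stronger than limitedness: given a formula $\alpha$ and a world $w \in \tset{\alpha}{M}$, it is not enough to exhibit \emph{some} $\alpha$-maximal world; the witness $u \in \maximalSetM{\tset{\alpha}{M}}$ has to additionally satisfy $u = w$ or $u \preff w$. The idea is to start from $w$, climb to progressively more preferable $\alpha$-worlds, collecting them into a chain ordered by $\preff$, and then take the most preferable world of this chain as the required witness.

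First I would record the auxiliary fact (the content of the commented-out $\preff$-transitivity lemma) that $\preff$ is transitive whenever $\preffeq$ is: from $a \preff b$ and $b \preff c$ we get $a \preffeq c$ by transitivity of $\preffeq$, and also $c \npreffeq a$, since $c \preffeq a$ together with $a \preffeq b$ would yield $c \preffeq b$, contradicting $b \preff c$. Since $\preff$ is irreflexive directly from its definition, transitivity makes it a strict partial order, and on the finite set $W$ this rules out any $\preff$-cycles, hence any repetition along such a chain.

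The main construction then follows the longest-path pattern of \lemmaword~\ref{lem:acyclic_limited}: fix $\alpha$ and $w \in \tset{\alpha}{M}$ and consider a longest sequence $w = w_0, w_1, \dots, w_m$ of worlds in $\tset{\alpha}{M}$ with $w_{i+1} \preff w_i$ for each $i$. Finiteness of $W$ bounds its length and acyclicity of $\preff$ forbids repetitions, so a longest such sequence exists; its most preferable world $w_m$ must lie in $\maximalSetM{\tset{\alpha}{M}}$, since otherwise any $u \preff w_m$ in $\tset{\alpha}{M}$ would extend the sequence. If $m = 0$ the witness is $u = w_m = w$; if $m \ge 1$, transitivity of $\preff$ collapses $w_m \preff w_{m-1} \preff \dots \preff w_0$ into $w_m \preff w_0 = w$. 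In either case $u = w_m$ satisfies the smoothness condition. The only genuine obstacle is this last collapsing step: transitivity of $\preff$ is exactly what upgrades the bare existence of a maximal world (limitedness) into domination over the prescribed $w$ (smoothness), so establishing and correctly invoking that transitivity is the crux of the argument.
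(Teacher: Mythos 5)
Your proof is correct and takes essentially the same route as the paper's: first deriving transitivity of $\preff$ from transitivity of $\preffeq$, then considering a longest $\preff$-ascending path of $\alpha$-worlds starting from $w$, whose endpoint $w_m$ is $\alpha$-maximal and satisfies $w_m = w$ or $w_m \preff w$. Your explicit remark that transitivity of $\preff$ is used twice --- to rule out repetitions on the path and to collapse the chain into $w_m \preff w$ --- merely spells out what the paper's proof leaves implicit.
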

\begin{proof}
First, notice that transitivity of $\preffeq$ implies transitivity of $\preff$. Indeed, if $w_1 \preff w_2$ and $w_2 \preff w_3$ then $w_1 \preffeq w_3$ by transitivity of $\preffeq$ and $w_3 \not\preffeq w_1$ since otherwise there would be a transitive triangle on these three worlds and none of them could be strictly preferable to another. Now, for an arbitrary $w_0 \in \tset{\alpha}{M}$, consider (any) longest path ${ w_0 \prec w_1 \prec w_2 \prec \dots }$ with worlds from $\tset{\alpha}{M}$ staring from $w_0$. Since $W$ is finite and there can be no repetitions on the path due to transitivity of $\preff$, the path is finite and there is the last world $w_m$ for which there is no $u \in \tset{\alpha}{M}$ such that $u \preff w_m$, so $w_m \in \maximalSet{\preff}{\tset{\alpha}{M}}$ and either $w_m = w_0$ or $w_m \preff w_0$, so $M$ is smooth.
\end{proof}

In $\smc{\F}$ non-transitivity was essential: we can not put $\maxchainlabel \preffl \fallabel{v}$ since we selected worlds in the maximal chain from the whole initial model, so $\labeling(\maxchainlabel)$  can be not \iiisuitable for $\labeling(\fallabel{v})$. However, smoothness allows us to select a maximal chain only among worlds in $\Bet{\preff}{v}$.
Specifiaclly, for every falsifying world $v$ we introduce individual \emph{chain-orbit} $\MaxOrbChain{M}{\varphi}{v} = \chain{M}{\maxseq{M}{\Bet{\preff}{v}}{\Cond{\varphi}}}$. We already know that this block is \mbox{\condiv-safe} by \lemmaword~\ref{lem:chain_iv_safe}, and we can show that for an \mbox{\FCM-model} $M$ it covers block $\labeling(\fallabel{v})$.

\begin{lemma}
\label{lem:chain_iv_covers_FCM}
For a transitive and smooth $M$, $\MaxOrbChain{M}{\varphi}{v}$ \condiv-covers $\antichain{M}{\{v\}}$.
\end{lemma}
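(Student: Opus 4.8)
The plan is to unfold the definition of \condiv-covering and reduce it to the same ``no surviving condition'' argument used in \lemmaword~\ref{lem:chain_iv_covers_F}, now relativised to $\Bet{\preff}{v}$. Since $\worlds(\antichain{M}{\{v\}}) = \{v\}$, I only have to check the covering inclusion for $w = v$, i.e. to show $\SatForm{M}{\Bet{\preff}{v}} \cap \Cond{\varphi} \subseteq \SatForm{M}{\worlds(\MaxOrbChain{M}{\varphi}{v})}$. So I would fix $\alpha \in \Cond{\varphi}$ together with a witness $u \in \Bet{\preff}{v}$ such that $M, u \models \alpha$, and aim to exhibit a world of the chain $\chain{M}{\maxseq{M}{\Bet{\preff}{v}}{\Cond{\varphi}}}$ that also satisfies $\alpha$.

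The core step runs exactly as in the \F case, but with $U = \Bet{\preff}{v}$ in place of $\worlds(M)$. Let $\CondSet_0 = \Cond{\varphi} \supseteq \CondSet_1 \supseteq \dots \supseteq \CondSet_m$ be the sets of remaining conditions produced by $\maxseqword$, where the recursion stops at step $m$. If $\alpha \notin \CondSet_m$, then $\alpha$ was discarded at some earlier step $j$ precisely because the appended world $\disjbest{M}{\Bet{\preff}{v}}{\CondSet_j}$ satisfies $\alpha$; that world lies in the chain and we are done. So I would argue by contradiction that $\alpha \in \CondSet_m$ is impossible. Since $\CondSet_m \ni \alpha$ is non-empty, termination must be caused by the second clause of $\maxseqword$, i.e. $\disjset{M}{\Bet{\preff}{v}}{\CondSet_m} = \Bet{\preff}{v} \cap \maximalSet{\preff}{\tset{\bigvee_{\beta \in \CondSet_m} \beta}{M}} = \emptyset$. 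But the witness gives $u \in \Bet{\preff}{v}$ with $M, u \models \bigvee_{\beta \in \CondSet_m} \beta$, so by smoothness of $M$ there is $u^\ast \in \maximalSet{\preff}{\tset{\bigvee_{\beta \in \CondSet_m} \beta}{M}}$ with $u^\ast = u$ or $u^\ast \preff u$; in either case I would use transitivity of $\preff$ (which follows from transitivity of $\preffeq$, as derived inside \lemmaword~\ref{lem:transitive-smooth}) together with $u \preff v$ to conclude $u^\ast \preff v$, hence $u^\ast \in \Bet{\preff}{v}$. This places $u^\ast$ into the intersection I assumed empty, a contradiction.

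The one genuinely new point over \lemmaword~\ref{lem:chain_iv_covers_F} --- and the step I expect to be delicate --- is that the chain is now selected only from $\Bet{\preff}{v}$, so global limitedness no longer guarantees a maximal witness inside that restricted set. This is exactly where the \FCM hypotheses do the work: smoothness supplies a $\preff$-maximal world for the disjunction that is $\preff$-below (or equal to) the existing witness $u$, and transitivity then propagates $u \preff v$ to that maximal world, keeping it strictly below $v$. I would make sure to invoke transitivity of the \emph{strict} relation $\preff$ explicitly, since $\Bet{\preff}{v}$ is defined via $\preff$ rather than $\preffeq$, and this is precisely the ingredient that lets a locally restricted selection still capture every condition satisfiable strictly below $v$.
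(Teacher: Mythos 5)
Your proposal is correct and follows essentially the same route as the paper's proof: both reduce the claim to showing that a condition $\alpha$ witnessed in $\Bet{\preff}{v}$ must eventually be discarded by $\maxseqword$, using smoothness to produce a $\preff$-maximal world for the remaining disjunction at or strictly below the witness $u$, and transitivity of the strict relation $\preff$ (derived from transitivity of $\preffeq$) to keep that world inside $\Bet{\preff}{v}$. The only cosmetic difference is that the paper proves the invariant directly --- $\alpha \in \CondSet_i$ implies $\disjset{M}{\Bet{\preff}{v}}{\CondSet_i} \neq \emptyset$ --- whereas you phrase the same fact as a contradiction at the terminal step $m$; the mathematical content is identical.
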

\begin{proof}
Suppose that (1) there is some $u \preff v$ in $M$ such that $M, u \models \alpha$ for some $\alpha \in \Cond{\varphi}$, we need to show that there is a world $z_k$ in the chain such that $M, z_k \models \alpha$. Similarly to \lemmaword~\ref{lem:chain_iv_covers_F}, we show it by proving that in this case $\alpha$ is removed from the set of conditions $\CondSet_i$ at some point. And to show this, it is enough to prove that for smooth models (*) $\alpha \in \CondSet_i$ implies $\disjset{M}{\Bet{\preff}{v}}{\CondSet_i} \neq \emptyset$ from the definition of $\maxseqword$ (then the sequence of maximal worlds cannot end while $\alpha$ belongs to $\CondSet_i$).

Let us prove (*). Suppose that $\alpha \in \CondSet_i$. From this and (1) follows $u \in \tset{\bigvee_{\alpha \in \CondSet_i} \alpha}{M}$. Due to smoothness of $M$ it implies that (2) there is ${u' \in \maximalSet{\preff}{\tset{\bigvee_{\alpha \in \CondSet_i} \alpha}{M}}}$ such that either $u' = u$ or $u' \preff u$. In either case $u' \preff v$ (since $u \preff v$ by (1) and transitivity of $\preffeq$ implies transitivity of $\preff$). So, we have (3) $u' \in \Bet{\preff}{v}$. (2) and (3) together imply $u' \in \disjset{M}{\Bet{\preff}{v}}{\CondSet_i}$, concluding the proof of (*).
\end{proof}

So we can obtain a small model construction for \FCM by replacing each orbit $\orblabel{v}$ with an individual maximal chain $\MaxOrbChain{M}{\varphi}{v}$ (which we will label $\betmaxchainlabel{v}$). The common chain $\maxchainlabel$ from $\smc{\F}$ is not needed anymore.

\begin{definition}{\bf(Small Model Construction for \FCM)}\\
If $M$ is an \mbox{\FCM-countermodel} for $\varphi$, $\smc{\FCM} = \tuple{L, \preffeql, \labeling}$ where $L = \{\fallabel{v}, \betmaxchainlabel{v} \mid v \in \Fal \}$, $\labeling(\fallabel{v}) = \antichain{M}{\{v\}}$, $\labeling(\betmaxchainlabel{v}) = \MaxOrbChain{M}{\varphi}{v}$ and a preference relation $\preffeql$ on blocks is demonstrated on \figureword~\ref{fig:small_model_constructions}c.
\end{definition}

\begin{theorem}
If $M$ is a \mbox{\FCM-countermodel} for $\varphi$ then $\generated{\smc{\FCM}}$ is a \mbox{\FCM-countermodel} for $\varphi$ and $|\worlds(\generated{\smc{\FCM}})| = \O(|\varphi|^2)$.
\end{theorem}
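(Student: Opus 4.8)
The plan is to follow the two-part structure of the proofs for \E and \F: first show that $\generated{\smc{\FCM}}$ is a countermodel for $\varphi$ via \theoremword~\ref{lem:composite_countermodel_conditions}, and then verify separately that it meets the \FCM frame conditions and the quadratic size bound.

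For the countermodel part I would check the two hypotheses of \theoremword~\ref{lem:composite_countermodel_conditions}. Condition (a) holds because, for each $v \in \Fal$, the block $\labeling(\fallabel{v}) = \antichain{M}{\{v\}}$ is a single-world antichain and hence flat, while the only label $b'$ with $b' \preffl \fallabel{v}$ is $\betmaxchainlabel{v}$, whose block $\labeling(\betmaxchainlabel{v}) = \MaxOrbChain{M}{\varphi}{v}$ draws its worlds from $\Bet{\preff}{v}$ and is therefore \iiisuitable for $\labeling(\fallabel{v})$. Condition (b) follows from \lemmaword~\ref{lem:chain_iv_safe}, by which each chain-orbit $\labeling(\betmaxchainlabel{v})$ is $\condiv$-safe, together with \lemmaword~\ref{lem:chain_iv_covers_FCM} (applicable since, being an \FCM-countermodel, $M$ is smooth and transitive), by which $\labeling(\betmaxchainlabel{v})$ $\condiv$-covers $\labeling(\fallabel{v})$.

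The genuinely new step, and the main obstacle relative to \E and \F, is verifying that $\generated{\smc{\FCM}}$ satisfies the \FCM conditions. By \lemmaword~\ref{lem:transitive-smooth} (finiteness being immediate) it suffices to prove that $\preffeqcc$ is transitive. I would derive this directly from the definition of the generated relation through a case split on which disjunct each of two composing steps uses. The label relation $\preffeql$ is itself transitive: its only proper edges run from a chain-orbit $\betmaxchainlabel{v}$ down to the matching single-world block $\fallabel{v}$, and no two such edges compose, since a target $\fallabel{v}$ never coincides with a source $\betmaxchainlabel{w}$. Consequently the two ``mixed'' cases collapse to a single $\preffeql$-step, and the purely intra-block case reduces to transitivity of the internal order of one block, which holds because every block here is either a single world or a chain (a linear order), while the pure inter-block case is covered by transitivity of $\preffeql$. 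Hence $\preffeqcc$ is transitive, so $\generated{\smc{\FCM}}$ is transitive and, by \lemmaword~\ref{lem:transitive-smooth}, smooth, i.e. an \FCM-countermodel.

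Finally, the size bound is a routine count: $\smc{\FCM}$ consists of $2 \cdot |\Fal|$ blocks, and since $|\Fal| = \O(|\varphi|)$ and each chain-orbit contributes at most $|\Cond{\varphi}| = \O(|\varphi|)$ worlds (the length of $\maxseqword$ being bounded by $|\Cond{\varphi}|$), we obtain $|\worlds(\generated{\smc{\FCM}})| = \O(|\varphi|^2)$.
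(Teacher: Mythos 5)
Your proposal is correct and takes essentially the same route as the paper: \theoremword~\ref{lem:composite_countermodel_conditions} with \lemmaword~\ref{lem:chain_iv_safe} and \lemmaword~\ref{lem:chain_iv_covers_FCM} for the countermodel part, \lemmaword~\ref{lem:transitive-smooth} for smoothness, and the same count of $2 \cdot |\Fal|$ blocks of at most $|\Cond{\varphi}|$ worlds for the size bound. The only difference is that you spell out the transitivity of $\preffeqcc$ via a case split on the generated relation (correctly observing that no two $\preffeql$-edges compose in this construction), a verification the paper leaves implicit when it asserts transitivity of the constructed preference relation before invoking \lemmaword~\ref{lem:transitive-smooth}.
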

\begin{proof}
$\smc{\FCM}$ is a countermodel for $\varphi$ by \theoremword~\ref{lem:composite_countermodel_conditions}, because $\labeling(\betmaxchainlabel{v})$ is \iiisuitable for $\labeling(\fallabel{v})$, each $\labeling(\betmaxchainlabel{v})$ is $\condiv$-safe by \lemmaword~\ref{lem:chain_iv_safe} and each $\labeling(\fallabel{v})$ is $\condiv$-covered by $\labeling(\betmaxchainlabel{v})$ by \lemmaword~\ref{lem:chain_iv_covers_FCM}.
$\smc{\FCM}$ is an $\FCM$-countermodel by \lemmaword~\ref{lem:transitive-smooth}.
$|\worlds(\generated{\smc{\FCM}})| =
\O(|\varphi|^2)$ since $\smc{\FCM}$ contains ${(2 \cdot |\Fal|)}$ blocks with at most $|\Cond{\varphi}|$ worlds each.
\end{proof}

\begin{remark}
\label{rem:Friedman_Halpern_FCM}
The form of the countermodel that we obtain --- a union of incomparable finite chains --- is the same as a Friedman-Halpern countermodel for logic \PCA (i.e. \FCM)~\cite{Friedman_Halpern_small_model_constructions}. However, we have achieved it by using different methods: they use a finite-model property of \PCL extensions (shown in~\cite{Burgess}) and extend the preference relation to a linear order, then construct chains by selecting the greatest world w.r.t. extended order independently for each conditional in $\PosOb$, while we do it using an iterative procedure.
The possibility of their selection fully relies on finitedness and transitivity, which due to \lemmaword~\ref{lem:transitive-smooth} is only possible in smooth models, so it cannot be applied to the weaker logics \E and \F. 
Furthermore, the Horn fragment\footnote{Conditional Horn formula is a formula of a form $\Ob{\gamma_1}{\alpha_1} \wedge \dots \wedge \Ob{\gamma_n}{\alpha_n} \to \Ob{\gamma_0}{\alpha_0}$.} of \PCA was studied extensively in the area of non-monotonic reasoning, where it is known as the KLM logic \KLMP~\cite{KLM} of preferential reasoning. A small model construction for \KLMP has been introduced in~\cite{KLM_entailment} and consists of a single chain of polynomial size (by essentially the same method as Friedman-Halpern). Notice, that both Friedman-Halpern and our constructions turn into a single chain when restricted to Horn formulas.
\end{remark}

\subsection{Small Model Construction for Logic \G}

For logic \G, we also need to ensure the totalness of the transformed model by leveraging the fact that the falsifying worlds in $\Fal$ are ordered in the initial model by $\preffeq$ which in a \G-model is a total preorder.

Let us consider first a simple case where $\preffeq$ in the given \G-countermodel is asymmetric (and therefore a strict linear order). Then there exists an ordering $v_1 \prec \dots \prec v_n$ of worlds from $\Fal$. Then we can linearly order blocks of $\smc{\FCM}$ with the following order: ${\fallabel{v_1} \succl \betmaxchainlabel{v_1} \succl \dots \succl \fallabel{v_n} \succl \betmaxchainlabel{v_n}}$. The \condiii-suitability will still be satisfied with such ordering, because for every $j \ge i$ we have ${\worlds(\betmaxchainlabel{v_j}) \subseteq \Bet{\preff}{v_j}}$ and $\Bet{\preff}{v_j} \subseteq \Bet{\preff}{v_i}$ due to transitivity of $\preffeq$.


In general, $\preffeq$ is not necessarily asymmetric, but we can generalize the same idea by grouping together $\preffeq$-equivalent worlds as in the following definition.

\begin{definition}{\bf(Stratification)}
For $M = \tuple{\worlds, \preffeq, \Val}$ and a finite set $U \subseteq \worlds$, a sequence $\seq{S_1, \dots, S_n}$ of non-empty subsets of $\worlds$ is called a \emph{stratification of $U$} when $U$ is the disjoint union of subsets $\{ S_i \}_{i=1}^n$ and for every $u_i \in S_i, u_j \in S_j$ we have $s_i \preffeq s_j$ iff $i \ge j$.
\end{definition}

For total preorders the unique stratification of any finite set is given by its factorization w.r.t. $\preffeq$-equivalence.

\begin{lemma}
If $\preffeq$ is transitive and total, there exists a unique stratification of any finite subset $U$.
\end{lemma}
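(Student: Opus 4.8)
The plan is to identify the strata with the equivalence classes of the mutual-preference relation. First I would introduce $w_1 \preffequiv w_2$, meaning $w_1 \preffeq w_2$ and $w_2 \preffeq w_1$. Totalness yields reflexivity of $\preffeq$ (take $w_1 = w_2$), so $\preffequiv$ is reflexive and symmetric, and transitivity of $\preffeq$ makes it transitive; hence $\preffequiv$ is an equivalence relation. Restricting to the finite set $U$ gives finitely many equivalence classes, and the claim will be that these are exactly the strata, enumerated in the $\preff$-order.

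For existence, I would observe that $\preff$ induces a strict total order on the set of $\preffequiv$-classes: declare one class to lie \emph{above} another when some world of the first is $\preff$-better than some world of the second. This is well defined, since transitivity of $\preffeq$ (and the resulting transitivity of $\preff$, as already used in the proof of \lemmaword~\ref{lem:transitive-smooth}) makes the choice of representatives irrelevant. On distinct classes the relation is irreflexive (as $w \preffeq w$ rules out $w \preff w$), transitive, and total — given two distinct classes, totalness forces one of $w_1 \preffeq w_2$, $w_2 \preffeq w_1$, and not both (else the classes coincide), so exactly one of $w_1 \preff w_2$, $w_2 \preff w_1$ holds. A finite strict total order admits a unique increasing enumeration; listing the classes as $S_1, \dots, S_n$ so that $S_j$ lies above $S_i$ whenever $j > i$ produces the candidate stratification. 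I would then verify the defining biconditional $u_i \preffeq u_j \iff i \ge j$ by cases: for $i = j$ the worlds satisfy $u_i \preffequiv u_j$, hence $u_i \preffeq u_j$; for $j > i$ the stratum $S_j$ lies above $S_i$, so $u_j \preff u_i$, giving $u_j \preffeq u_i$ together with $u_i \npreffeq u_j$, which is exactly what the biconditional requires for both index pairs.

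For uniqueness, suppose $\seq{S_1, \dots, S_n}$ is any stratification of $U$. The defining property forces each $S_i$ to be a single $\preffequiv$-class: two worlds in the same stratum satisfy the biconditional with both indices equal to $i$ in each direction, so they are mutually $\preffeq$-related and thus $\preffequiv$-equivalent; two worlds in distinct strata $S_i$, $S_j$ with, say, $i < j$ have $u_i \npreffeq u_j$ (since $i \ge j$ fails), so they lie in different classes. Therefore the underlying partition must coincide with the partition of $U$ into $\preffequiv$-classes, and the order of the strata is pinned down by $\preffeq$ exactly as in the existence argument, so the stratification is unique. The one point requiring care — and the place where the total-preorder hypothesis is genuinely used — is the repeated conversion of a failure $u_i \npreffeq u_j$ into a strict preference $u_j \preff u_i$ via totalness; everything else is bookkeeping about equivalence classes of a finite total preorder.
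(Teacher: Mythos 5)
Your proof is correct and takes essentially the same route as the paper: both factor $U$ by the mutual-preference equivalence $\preffequiv$ and then linearly order the resulting classes, the only cosmetic difference being that you work with the induced strict total order on classes where the paper uses the non-strict linear order $\preffsort$ (defined via $\preffeq$ on representatives). Your explicit checks of well-definedness, of the biconditional $u_i \preffeq u_j \Leftrightarrow i \ge j$, and of uniqueness simply spell out what the paper's proof asserts ``by definition.''
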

\begin{proof}
Consider an equivalence relation $\preffequiv$ on $U$ where $u_1 \preffequiv u_2$ means that both $u_1 \preffeq u_2$ and $u_2 \preffeq u_1$. Consider further a relation $\preffsort$ on the set of equivalence classes of $U$ w.r.t. $\preffequiv$ where $S_i \preffsort S_j$ when there exist $u_i \in S_i$ and $u_j \in S_j$ such that $u_i \preffeq u_j$. Notice that for a transitive and total $\preffeq$ the relation $\preffsort$ is a linear order: it is antisymmetric due to definitions of $\preffequiv$ and $\preffsort$, and it is transitive and total (and hense reflexive) due to the transitivity and totalness of $\preffeq$. This linear ordering gives a stratification by definition. Notice also that it is the only stratification: every element of a stratification should be an equivalence class w.r.t. $\preffequiv$ and their order in the list should be aligned with $\preffsort$ (i.e. $i \ge j$ implies $S_i \preffsort S_j$) by definition.
\end{proof}

Therefore, we can take the stratification $\seq{S_1, \dots, S_n}$ of $\Fal$ w.r.t. $\preffeq$ and create a block $\clique{M}{S_i}$ for every group $S_i$.

Notice that for $u_1, u_2 \in S_i$ both $u_1 \preffeq u_2$ and $u_2 \preffeq u_1$ so $\Bet{\preff}{u_1} = \Bet{\preff}{u_2}$ (due to transitivity). Therefore, to \condiv-cover block $\clique{M}{S_i}$ we can take orbit-chain $\MaxOrbChain{M}{\varphi}{\rep(S_i)}$ with arbitrary representative of $S_i$.

\begin{lemma}
\label{lem:chain_iv_covers_G}
For a transitive and smooth $M = \tuple{W, \preffeq, \Val}$ and $S \subseteq \worlds(M)$, if $u \preffeq u'$ for all $u, u' \in S$ then $\MaxOrbChain{M}{\varphi}{\rep(S)}$ \condiv-covers $\clique{M}{S}$.
\end{lemma}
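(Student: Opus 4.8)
The plan is to reduce this directly to \lemmaword~\ref{lem:chain_iv_covers_FCM}, which already settles the single-world case. Write $r = \rep(S)$, so that $\MaxOrbChain{M}{\varphi}{\rep(S)}$ is precisely the orbit-chain built from the worlds strictly below $r$. Unfolding the definition of \condiv-covering, what I must show is that $\SatForm{M}{\Bet{\preff}{w}} \cap \Cond{\varphi} \subseteq \SatForm{M}{\worlds(\MaxOrbChain{M}{\varphi}{r})}$ for every $w \in \worlds(\clique{M}{S}) = S$.

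The crux, already flagged in the discussion preceding the lemma, is that every world of $S$ has the same set of strictly-better worlds, i.e. $\Bet{\preff}{w} = \Bet{\preff}{r}$ for all $w \in S$. First I would establish this set equality. Fix $w \in S$; by the hypothesis on $S$ both $w \preffeq r$ and $r \preffeq w$ hold. For the inclusion $\Bet{\preff}{w} \subseteq \Bet{\preff}{r}$, take $x \preff w$: transitivity of $\preffeq$ applied to $x \preffeq w$ and $w \preffeq r$ gives $x \preffeq r$, while $r \preffeq x$ would yield $w \preffeq x$ (from $w \preffeq r$ and $r \preffeq x$), contradicting $w \npreffeq x$; hence $x \preff r$. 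The reverse inclusion follows by the symmetric argument, swapping the roles of $w$ and $r$.

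With $\Bet{\preff}{w} = \Bet{\preff}{r}$ in hand, the satisfiable formulas coincide: $\SatForm{M}{\Bet{\preff}{w}} = \SatForm{M}{\Bet{\preff}{r}}$. Now I apply \lemmaword~\ref{lem:chain_iv_covers_FCM} with $v = r$: since $M$ is transitive and smooth, $\MaxOrbChain{M}{\varphi}{r}$ \condiv-covers $\antichain{M}{\{r\}}$, which for the single world $r$ unfolds to $\SatForm{M}{\Bet{\preff}{r}} \cap \Cond{\varphi} \subseteq \SatForm{M}{\worlds(\MaxOrbChain{M}{\varphi}{r})}$. Chaining these, for every $w \in S$ I obtain $\SatForm{M}{\Bet{\preff}{w}} \cap \Cond{\varphi} = \SatForm{M}{\Bet{\preff}{r}} \cap \Cond{\varphi} \subseteq \SatForm{M}{\worlds(\MaxOrbChain{M}{\varphi}{r})}$, which is exactly the \condiv-covering condition for $\clique{M}{S}$.

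There is essentially no hard part: the entire content is the elementary set equality $\Bet{\preff}{w} = \Bet{\preff}{r}$, after which the heavy lifting is inherited from \lemmaword~\ref{lem:chain_iv_covers_FCM}. The only point to watch is that the hypothesis ``$u \preffeq u'$ for all $u, u' \in S$'' is genuinely used in both directions (both $w \preffeq r$ and $r \preffeq w$), so that all elements of $S$ are $\preffeq$-equivalent and the choice of representative $\rep(S)$ is immaterial.
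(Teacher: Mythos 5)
Your proposal is correct and takes essentially the same route as the paper: the paper's proof also observes that $\Bet{\preff}{u} = \Bet{\preff}{\rep(S)}$ for every $u \in S$ by transitivity of $\preffeq$ and then reduces \condiv-covering of the clique to \lemmaword~\ref{lem:chain_iv_covers_FCM}. The only difference is that you spell out both inclusions of the set equality (using $\preffeq$-equivalence of the worlds in $S$ in both directions), which the paper leaves implicit behind the parenthetical ``since $\preffeq$ is transitive.''
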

\begin{proof}
For any $u \in S_i$ holds $\Bet{\preff}{u} = \Bet{\preff}{\rep(S)}$ (since $\preffeq$ is transitive), so any formula from $\Cond{\varphi}$ satisfied in some world from $\Bet{\preff}{v}$ is also satisfied by some world in $\MaxOrbChain{M}{\varphi}{\rep(V)}$ by \lemmaword~\ref{lem:chain_iv_covers_FCM}.
\end{proof}

Thus we can take as the construction $\smc{\G}$ a linearly-oredered sequence of blocks in which cliques $\clique{M}{S_i}$, labeled $\grfallabel{S_i}$, are interleaved with chain-orbits $\MaxOrbChain{M}{\varphi}{\rep(S_i)}$, labeled $\grbetmaxchainlabel{S_i}$.

\begin{definition}{\bf(Small Model Construction for \G)}
For a \mbox{\G-countermodel} $M = \tuple{\worlds, \preffeq, \Val}$ for a formula $\varphi$,
let $\seq{S_1, \dots, S_n}$ be the unique stratification of $\Fal$ w.r.t. $\preffeq$. Then $\smc{\G} = \tuple{L, \preffeql, \labeling}$ where $L = \bigcup_{i=1}^n \{\grfallabel{S_i}, \grbetmaxchainlabel{S_i} \}$, $\labeling(\grfallabel{S_i}) = \clique{M}{V_i}$, $\labeling(\grbetmaxchainlabel{S_i}) = \chain{M}{\MaxOrbChain{M}{\varphi}{\rep(S_i)}}$ and the blocks are ordered linearly as follows: ${\grfallabel{S_1} \succl \grbetmaxchainlabel{S_1} \succl \dots \succl \grfallabel{S_n} \succl \grbetmaxchainlabel{S_n}}$.
\end{definition}

\begin{theorem}
If $M$ is an \mbox{\G-countermodel} for $\varphi$ then $\generated{\smc{\G}}$ is a \mbox{\G-countermodel} for $\varphi$ and $|\worlds(\generated{\smc{\G}})| = \O(|\varphi|^2)$.
\end{theorem}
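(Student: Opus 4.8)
The plan is to follow the same three-step template already used for $\smc{\E}$, $\smc{\F}$, and $\smc{\FCM}$: first show $\generated{\smc{\G}} \not\models \varphi$ by discharging the two conditions of \theoremword~\ref{lem:composite_countermodel_conditions}, then verify that the generated relation $\preffeqcc$ meets the \G-frame conditions of \figureword~\ref{tab:initial_taxonomy}, and finally count worlds. Since $\generated{\smc{\G}}$ rearranges $M$ with $\prot((l,w)) = w$, \theoremword~\ref{lem:composite_countermodel_conditions} applies directly, so the real work is conditions (a) and (b). The genuinely new ingredient relative to $\smc{\FCM}$ is that all blocks are now threaded into a single linear order $\grfallabel{S_1} \succl \grbetmaxchainlabel{S_1} \succl \dots \succl \grfallabel{S_n} \succl \grbetmaxchainlabel{S_n}$ rather than left incomparable, so the care lies in checking that this stronger ordering does not break \iiisuitable-ness.

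For condition (a): every $v \in \Fal$ lies in a unique stratum $S_i$, so I take $b_v = \grfallabel{S_i}$, and $\clique{M}{S_i}$ is flat because its worlds are mutually $\preffeq$-equivalent, hence none is strictly $\preff$-preferable to another. The heart of the argument is \iiisuitable-ness of every block $b' \preffl \grfallabel{S_i}$, which in the linear order means $\grbetmaxchainlabel{S_i}$ together with all blocks built from strata $S_j$ with $j > i$. I would first observe that within a stratum $\Bet{\preff}{w} = \Bet{\preff}{\rep(S_i)}$ for every $w \in S_i$ (from transitivity of $\preffeq$ and $\preffequiv$-equivalence inside $S_i$). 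Then the worlds of $\grbetmaxchainlabel{S_i} = \MaxOrbChain{M}{\varphi}{\rep(S_i)}$ lie in $\Bet{\preff}{\rep(S_i)}$ by construction; for $u \in S_j$ with $j > i$ and $w \in S_i$ the stratification yields $u \preff w$, so each higher clique contributes only worlds of $\Bet{\preff}{w}$; and each higher chain-orbit $\grbetmaxchainlabel{S_j}$ contributes worlds of $\Bet{\preff}{\rep(S_j)} \subseteq \Bet{\preff}{w}$ by transitivity (as $\rep(S_j) \preff w$). This certifies \iiisuitable-ness for all such $b'$.

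Condition (b) is short: each chain-orbit $\grbetmaxchainlabel{S_i}$ is \condiv-safe by \lemmaword~\ref{lem:chain_iv_safe}, and each clique $\grfallabel{S_i}$ is \condiv-covered by $\grbetmaxchainlabel{S_i}$, which sits immediately $\preffl$-above it, by \lemmaword~\ref{lem:chain_iv_covers_G}. This gives $\generated{\smc{\G}} \not\models \varphi$. For the \G-frame conditions I would establish totalness and transitivity of $\preffeqcc$ and then invoke \lemmaword~\ref{lem:transitive-smooth} on the finite model to obtain smoothness for free. Totalness: distinct labels are comparable because the blocks form a linear order, and two worlds of one block are comparable because $\preffeqb$ is universal for a clique and a linear order for a chain. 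Transitivity follows from a short case split on which of the three labels coincide, using transitivity of the label order in the cross-block cases and transitivity of $\preffeqb$ inside a fixed clique or chain in the remaining case.

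Finally, for the size bound: $\smc{\G}$ has $2n$ blocks with $n \le |\Fal|$; the cliques contribute $\sum_i |S_i| = |\Fal|$ worlds in total, while each of the $n$ chain-orbits has at most $|\Cond{\varphi}|$ worlds by the length bound on $\maxseqword$, so $|\worlds(\generated{\smc{\G}})| \le |\Fal| + |\Fal|\cdot|\Cond{\varphi}|$. Since $\Fal$ has one world for $\varphi$ plus one per formula in $\NegBox \cup \NegOb$ we get $|\Fal| = \O(|\varphi|)$, and $|\Cond{\varphi}| = \O(|\varphi|)$, whence $\O(|\varphi|^2)$. I expect the main obstacle to be exactly the \iiisuitable-ness check in condition (a): unlike $\smc{\FCM}$, where incomparable blocks meant each clique only needed its own chain-orbit above it, the single linear order now places every higher stratum strictly above $\grfallabel{S_i}$, so the argument must genuinely lean on transitivity of $\preffeq$ together with the stratification ordering to guarantee that those higher blocks contribute only worlds from $\Bet{\preff}{w}$.
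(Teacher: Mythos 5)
Your proof is correct and follows essentially the same route as the paper's: it discharges conditions (a) and (b) of \theoremword~\ref{lem:composite_countermodel_conditions} via \lemmaword~\ref{lem:chain_iv_safe} and \lemmaword~\ref{lem:chain_iv_covers_G} and obtains smoothness from \lemmaword~\ref{lem:transitive-smooth}, with your detailed \iiisuitable check across the linear order (using transitivity of $\preffeq$ and the stratification) being exactly what the paper compresses into its parenthetical remark and its earlier discussion of the asymmetric case. Your world count is in fact slightly more careful than the paper's (a clique $\clique{M}{S_i}$ may exceed $|\Cond{\varphi}|$ worlds, but the cliques jointly contribute only $|\Fal|$ worlds), and both arguments yield the stated $\O(|\varphi|^2)$ bound.
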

\begin{proof}
$\smc{\G}$ is a countermodel for $\varphi$ by \theoremword~\ref{lem:composite_countermodel_conditions}: $\labeling(\grfallabel{S_i})$ is flat and all block $\preffl$-preferable to it are \iiisuitable for it (since $\preffeq$ in $M$ is transitive), each $\labeling(\grbetmaxchainlabel{S_i})$ is $\condiv$-safe by \lemmaword~\ref{lem:chain_iv_safe} and each $\labeling(\grfallabel{S_i})$ is $\condiv$-covered by $\labeling(\grbetmaxchainlabel{S_i})$ due to \lemmaword~\ref{lem:chain_iv_covers_G}.
$\smc{\G}$ is a \mbox{$\G$-countermodel} since its preference relation is transitive and total, and also smooth by \lemmaword~\ref{lem:transitive-smooth}.
$|\worlds(\generated{\smc{\G}})| =
\O(|\varphi|^2)$ since $\smc{\G}$ contains at most ${(2 \cdot |\Fal|)}$ blocks with at most $|\Cond{\varphi}|$ worlds each.
\end{proof}

\begin{remark} Friedman and Halpern also provide a counter-model for logic \VTA (i.e. \G)~\cite{Friedman_Halpern_small_model_constructions}. They use an ad hoc approach, different from the one they use for the other extensions of \PCL. For \VTA for each conditional they simply take one world from the original model without
changing the preference relation, resulting in a model of linear size. 
Although \theoremword~\ref{lem:rearranged_countermodel_conditions} can be also used to establish the adequacy of their constriction,
we provided a different construction (with a new explicitly defined preference relation and with potentially a quadratic number of worlds) for uniformity with the constructions for the other three \AA qvist's logics. 
\end{remark}

\section{Applications}

In this section, we describe two applications of our small model constructions: alternative semantical characterizations, complexity and encodings in the classical propositional logic.

\subsection{Alternative semantical characterizations}
\label{sec:alternative_semantics}

We will call a class $\modelclass$ of preference models a semantical characterization for theoremhood in logic $\logicsymb$ when any $\varphi$ is a theorem of $\logicsymb$ iff $\varphi$ is valid in all models from $\modelclass$.
New semantical characterizations for theoremhood can be extracted from the specific form of our small model constructions. Namely, any model property satisfied by $\smc{\logicsymb}$ that is stronger than some existing characterization for $\logicsymb$ (e.g. from \figureword~\ref{tab:initial_taxonomy}) can be used as an alternative characterization.

\begin{lemma}
\label{lem:alternative_semantics_generator}
Let $\modelclass$ be a class of models characterizing theoremhood in $\logicsymb$. If $\modelclass' \subseteq \modelclass$ and $\smc{\logicsymb} \in \modelclass'$ for every $\varphi$ and $M$, then $\modelclass'$ also characterizes theoremhood in $\logicsymb$.
\end{lemma}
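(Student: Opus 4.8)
The plan is to unfold the definition of ``semantical characterization'' given just before the lemma and to prove the required biconditional --- namely that $\varphi$ is a theorem of $\logicsymb$ if and only if $\varphi$ is valid in all models from $\modelclass'$ --- by establishing its two directions separately, reducing each to facts already available in the excerpt.

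First I would dispatch the easy direction. If $\varphi$ is a theorem of $\logicsymb$, then, since $\modelclass$ characterizes theoremhood in $\logicsymb$, $\varphi$ is valid in every model of $\modelclass$; because $\modelclass' \subseteq \modelclass$, it is a fortiori valid in every model of $\modelclass'$. This step uses only the inclusion $\modelclass' \subseteq \modelclass$ together with the assumed characterization by $\modelclass$, so no construction is needed here.

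For the converse I would argue by contraposition: assuming $\varphi$ is \emph{not} a theorem of $\logicsymb$, I exhibit a model in $\modelclass'$ that falsifies $\varphi$. By the definition of theoremhood there is an $\logicsymb$-countermodel $M$ for $\varphi$, i.e. a preference model meeting the conditions of \figureword~\ref{tab:initial_taxonomy} for $\logicsymb$ with $M \not\models \varphi$. The crucial step is then to invoke the small model construction theorem for the relevant logic (the four theorems asserting that $\generated{\smc{\logicsymb}}$ is again an $\logicsymb$-countermodel for $\varphi$), which yields $\generated{\smc{\logicsymb}} \not\models \varphi$. By the standing hypothesis $\smc{\logicsymb} \in \modelclass'$ for every $\varphi$ and $M$, this witnessing countermodel lies in $\modelclass'$, so $\varphi$ fails in some model of $\modelclass'$, completing the contrapositive and hence the biconditional.

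I do not expect a genuine obstacle, as the statement is essentially bookkeeping layered on top of the small model constructions. The only point needing care is that validity in the smaller class $\modelclass'$ is \emph{weaker} than validity in $\modelclass$, so the converse direction cannot be obtained by monotonicity in $\modelclass$ alone; it genuinely relies on the small model theorems to guarantee that a witnessing countermodel can always be chosen inside $\modelclass'$. I would also flag the mild notational point that $\smc{\logicsymb}$ in the hypothesis should be read as the generated model $\generated{\smc{\logicsymb}}$, which is the object that properly belongs to a class of preference models.
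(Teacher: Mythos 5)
Your proposal is correct and follows essentially the same route as the paper: the easy direction by the inclusion $\modelclass' \subseteq \modelclass$, and the converse by contraposition via the small model theorems, observing that $\generated{\smc{\logicsymb}}$ is an $\logicsymb$-countermodel lying in $\modelclass'$. Your remark that $\smc{\logicsymb} \in \modelclass'$ should properly be read as $\generated{\smc{\logicsymb}} \in \modelclass'$ is a fair clarification of a slight notational looseness present in the paper itself.
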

\begin{proof}
If $\varphi$ is a theorem of $\logicsymb$, it is valid in all models in $\modelclass$, so it is also valid in all models of $\modelclass'$. If $\varphi$ is not a theorem of $\logicsymb$, it is not valid in $\smc{\logicsymb}$ that belongs to $\modelclass'$.
\end{proof}




We can use this method to characterize theoremhood in \AA qvist logics with \emph{frame properties}, i.e. properties of the preference relation. Notice that the limit conditions (limitedness and smoothness) used for the characterization of \F, \FCM, and \G are not frame properties: they impose conditions only on truth sets of the model. This choice plays a vital role in establishing correspondence between semantics and known axiomatizations of \AA qvist's logics, but it makes it hard to work with these models since you need to distinguish which subsets of worlds can be a truth set. However, we can notice that our small model constructions satisfy the corresponding limit conditions for all subsets of worlds, therefore limitedness/smoothness on the level of frames can be used to characterize theoremhood also.

More importantly, our constructions satisfy some stronger frame properties. We already used these properties to prove that $\smc{\logicsymb}$ generates an \mbox{$\logicsymb$-countermodel}.
If we consider only finite models limit conditions can be replaced with natural conditions on preference relations: limitedness can be replaced with acyclicity, and smoothness can be dropped in presence of transitivity.



\begin{figure}
    \centering

    \begin{tabular}{|c|c|P{15mm}|P{15mm}|P{15mm}|}
    \hline
    \multirow{2}{*}{Logic} & \multirow{2}{*}{Cardinality of $W$} & \multicolumn{3}{c|}{Properties of $\preffeq$} \\
    \hhline{|~|~|-|-|-|}
     & & acyclic & transitive & total  \\
    \hline
    $\E$ & finite & & & \\
    \hline
    $\F$ & finite & \yes & & \\
    \hline
    $\FCM$ & finite & & \yes & \\
    \hline
    $\G$ & finite & & \yes & \yes \\
    \hline
    \end{tabular}
    \caption{Finite-model characterizations of theoremhood in \AA qvist's logics (with maximality as the notion of bestness). }
    \label{tab:finite_model_taxonomy}
\end{figure}

\begin{theorem}
\label{lem:finite_model_semantics}
Formula $\varphi$ is a theorem of \AA qvist logic $\logicsymb$ iff $M \models \varphi$ for all finite models $M$ satisfying the frame properties for logic $\logicsymb$ from \figureword~\ref{tab:finite_model_taxonomy}.
\end{theorem}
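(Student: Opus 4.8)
The plan is to instantiate \lemmaword~\ref{lem:alternative_semantics_generator} once for each logic $\logicsymb$, taking $\modelclass$ to be the class of models characterizing $\logicsymb$ in \figureword~\ref{tab:initial_taxonomy} (which characterizes theoremhood by the very definition of a theorem) and $\modelclass'$ to be the class of finite models satisfying the frame properties for $\logicsymb$ from \figureword~\ref{tab:finite_model_taxonomy}. It then suffices to verify the two hypotheses of that lemma: the inclusion $\modelclass' \subseteq \modelclass$ and the membership $\generated{\smc{\logicsymb}} \in \modelclass'$ for every $\varphi$ and every applicable $M$.

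For the inclusion $\modelclass' \subseteq \modelclass$ I would argue logic by logic that the finite frame properties already force the original limit conditions. For \E both classes are unconstrained beyond being preference models, so the inclusion is immediate. For \F the only new ingredient is that a finite model with acyclic $\preff$ is limited, which is exactly \lemmaword~\ref{lem:acyclic_limited}. For \FCM and \G the properties shared between the two tables (transitivity, and additionally totality for \G) carry over verbatim, and the one remaining condition --- smoothness --- follows from transitivity on finite models by \lemmaword~\ref{lem:transitive-smooth}. Hence in every case a finite model with the \figureword~\ref{tab:finite_model_taxonomy} frame properties satisfies the \figureword~\ref{tab:initial_taxonomy} model conditions, giving $\modelclass' \subseteq \modelclass$.

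The membership $\generated{\smc{\logicsymb}} \in \modelclass'$ has in fact already been established while proving the four small-model theorems: each construction yields a model with $\O(|\varphi|^2)$ worlds (hence finite) whose preference relation carries precisely the frame property listed for $\logicsymb$ --- no constraint for \E, acyclic $\preff$ for \F, transitive $\preffeq$ for \FCM, and transitive plus total $\preffeq$ for \G. Since these memberships hold whenever the construction is defined, i.e.\ whenever $M$ is an $\logicsymb$-countermodel, the non-theorem case of \lemmaword~\ref{lem:alternative_semantics_generator} applies: a non-theorem has some $\logicsymb$-countermodel $M$ lying in $\modelclass$, and $\generated{\smc{\logicsymb}}$ is then a countermodel lying in $\modelclass'$. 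I do not expect a genuine obstacle, since the substantive work was carried out in the small-model constructions; the only point requiring care is to keep the direction of the inclusion straight --- it is the new, smaller class $\modelclass'$ that must sit inside the old class $\modelclass$, while it is the construction $\generated{\smc{\logicsymb}}$ that must be shown to land inside the smaller class.
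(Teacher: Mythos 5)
Your overall strategy coincides with the paper's: instantiate \lemmaword~\ref{lem:alternative_semantics_generator} with $\modelclass$ from \figureword~\ref{tab:initial_taxonomy} and $\modelclass'$ from \figureword~\ref{tab:finite_model_taxonomy}, and get the inclusion $\modelclass' \subseteq \modelclass$ from \lemmaword~\ref{lem:acyclic_limited} and \lemmaword~\ref{lem:transitive-smooth}. That part of your argument is sound; the cases \E, \FCM, and \G also go through exactly as you describe.

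There is, however, a genuine gap in the \F case, precisely at the membership claim $\generated{\smc{\F}} \in \modelclass'$, and your wording betrays it: you write that the construction for \F carries ``acyclic $\preff$'', but the frame property in \figureword~\ref{tab:finite_model_taxonomy} is acyclicity of $\preffeq$ itself, not of its strict version. These do not coincide here: the chain block $\chain{M}{S}$ is defined with $w_i \preffeq^{ch} w_j$ iff $i \le j$, so every world in the block $\maxchainlabel$ satisfies $w \preffeq w$, and these reflexive loops are cycles of $\preffeq$ of length one. Hence $\generated{\smc{\F}}$, as constructed, does \emph{not} lie in the class of finite models with acyclic $\preffeq$, and \lemmaword~\ref{lem:alternative_semantics_generator} cannot be applied verbatim. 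What the small-model theorem for \F actually delivers is acyclicity of $\preff$, which suffices for limitedness via \lemmaword~\ref{lem:acyclic_limited} but not for membership in $\modelclass'$. The repair is short and is exactly what the paper's proof does: delete the reflexive pairs from $\preffeq$ in $\generated{\smc{\F}}$. This leaves $\preff$ unchanged (for $w \neq v$ nothing is removed, and $w \preff w$ is false both before and after), and since maximality --- and therefore satisfaction of every formula --- depends only on $\preff$, the modified model is still a countermodel; after the deletion $\preffeq$ is acyclic, so the non-theorem case of \lemmaword~\ref{lem:alternative_semantics_generator} applies to the modified model. Without this post-processing step, your proof would fail for \F at the point where the lemma's hypothesis $\generated{\smc{\logicsymb}} \in \modelclass'$ is checked.
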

\begin{proof}
Using \lemmaword~\ref{lem:alternative_semantics_generator}. Finite models with transitive preference realtion are smooth by \lemmaword~\ref{lem:transitive-smooth}. Finite models with acyclic preference relation are limited by \lemmaword~\ref{lem:acyclic_limited} (acyclicity of $\preffeq$ implies acyclicity of $\preff$), and although the preference relation in $\smc{\F}$ has cycles in the form of reflexive loops inside the chain, these loops can be removed without affecting the satisfaction in the model, so any non-theorem has a finite countermodel with acyclic preference relation.
\end{proof}

In addition, our models for \E, \F, \FCM satisfy antisymmetry so this property can be added to finite-model characterization for these logics from \figureword~\ref{tab:finite_model_taxonomy}, but not to the characterization of \G.\footnote{E.g., the principle of conditional excluded middle $\Ob{\gamma}{\alpha} \lor \Ob{\neg \gamma}{\alpha}$ is valid in all models with total and antisymmetric preference relation (since they have at most one $\alpha$-best world), but it can be easily falsified in \G by a model with two $\alpha$-worlds preferable to each other (and thus both allowed to be $\alpha$-best simultaniously).} Either reflexivity or irreflexivity can also be added since it is trivial to force them in any model without changing the satisfaction relation. Thus, \FCM is characterized by finite models where $\preffeq$ is a partial order (strict or non-strict). At the same time, finite models where $\preffeq$ is a linear order give some logic that is stronger than \G.
Even more specialized properties can be extracted from our construction via \lemmaword~\ref{lem:alternative_semantics_generator}, e.g. \FCM can be characterized by models that are unions of non-comparable finite chains.

\begin{remark}
Note that here we are only concerned with semantical characterizations of theoremhood. Our results can not be extended to characterizing entailments $\Gamma \vdash \varphi$ if $\Gamma$ is infinite (to provide a \emph{strongly complete} characterization of the logics). This is a natural limitation for finite-model characterizations since entailments from infinite sets of premises can not be characterized using only finite models for \Sfive already (see a counterexample in \appendixword~\ref{sec:strong_completeness}) and therefore for \AA qvist's logics that extend \Sfive too.
\end{remark}

\subsection{Complexity and automated deduction}

Our small model constructions show that for any non-valid formula there exists a countermodel with at most $N(\varphi)$ worlds, where $N(\varphi)$ is a certain upper bound polynomial w.r.t. $|\varphi|$. 
Plus, the stronger frame properties from \figureword~\ref{tab:finite_model_taxonomy} can be easily checked in polynomial time w.r.t. the model size. This immediately implies co-NP-completeness of theoremhood.


\begin{theorem}
\label{lem:finite_characterization}
Theoremhood is co-NP-complete for every \AA qvist's logic.
\end{theorem}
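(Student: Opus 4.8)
The plan is to prove the two halves of co-NP-completeness separately: membership in co-NP and co-NP-hardness, treating all four logics uniformly.

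For membership, I would show that the complement problem---non-theoremhood---lies in NP. By the small model constructions of the previous section, a formula $\varphi$ fails to be a theorem of $\logicsymb$ exactly when there is an $\logicsymb$-countermodel, and moreover one whose number of worlds is bounded by $N(\varphi) = \O(|\varphi|^2)$. A nondeterministic procedure can therefore guess a model $M$ on at most $N(\varphi)$ worlds (with the valuation restricted to the variables occurring in $\varphi$) and verify in polynomial time two things. First, that $M$ satisfies the frame properties for $\logicsymb$ from \figureword~\ref{tab:finite_model_taxonomy}: finiteness is immediate, and acyclicity, transitivity, and totalness of $\preffeq$ are each decidable in time polynomial in $|\worlds(M)|$. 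Second, that $M \not\models \varphi$, which amounts to computing the truth sets $\tset{\psi}{M}$ for all $\psi \in \SubF{\varphi}$ bottom-up and checking $\tset{\varphi}{M} \neq \worlds(M)$; each truth set is computable in time polynomial in $|\worlds(M)|$ and $|\varphi|$. By \theoremword~\ref{lem:finite_model_semantics} the guessed frame properties guarantee that this finite maximality-based semantics coincides with the one defining theoremhood, so a successful guess certifies non-theoremhood. Hence non-theoremhood is in NP and theoremhood is in co-NP.

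For hardness, I would reduce from the co-NP-complete problem of propositional tautology. Each \AA qvist logic extends classical propositional logic, and the truth of a purely propositional formula $\psi$ at a world depends only on the valuation at that world. Consequently, if $\psi$ is a classical tautology it is true at every world of every preference model and hence valid; conversely, if some propositional valuation falsifies $\psi$, the single-world model carrying that valuation, equipped with the reflexive loop $w \preffeq w$, satisfies every frame property in \figureword~\ref{tab:finite_model_taxonomy} for each $\logicsymb$ (its strict relation $\preff$ is empty, hence acyclic, while $\preffeq$ is transitive and total) and falsifies $\psi$, so $\psi$ is not a theorem. The identity map is therefore a polynomial-time many-one reduction from propositional tautology to theoremhood of $\logicsymb$, establishing co-NP-hardness.

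Combining the two directions yields co-NP-completeness for each logic. The substantive content lies entirely in the membership direction, and there the crux---the existence of a countermodel of polynomial size with easily verifiable frame properties---has already been delivered by the small model constructions together with \theoremword~\ref{lem:finite_model_semantics}; what remains is only the routine observation that both frame-checking and model-checking for the finite semantics of \figureword~\ref{tab:finite_model_taxonomy} run in polynomial time.
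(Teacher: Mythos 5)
Your proposal is correct and matches the paper's own proof essentially verbatim: the paper likewise certifies non-theoremhood by nondeterministically guessing a countermodel of size at most $N(\varphi)$ (preference relation plus valuation on the variables of $\varphi$), checking $M \not\models \varphi$ and the frame properties of \figureword~\ref{tab:finite_model_taxonomy} in polynomial time, and derives hardness from the fact that a propositional formula is a classical tautology iff it is a theorem of the given \AA qvist logic. One minor slip worth fixing in your hardness argument: in \figureword~\ref{tab:finite_model_taxonomy} acyclicity is a property of $\preffeq$ itself, not merely of $\preff$ (the paper explicitly removes the reflexive loops from $\smc{\F}$ for exactly this reason), so for \F your single-world falsifying model should carry the empty relation rather than the reflexive loop --- the loop is needed only for \G, where totality forces $w \preffeq w$.
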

\begin{proof}
Non-theoremhood can be checked non-deterministically in polynomial time by guessing a countermodel $M$ of size at most $N(\varphi)$ (i.e. guessing preference relation and valuation for all variables occurring in $\varphi$) and then checking $M \not\models \varphi$ and the required properties from \figureword~\ref{tab:finite_model_taxonomy}. \mbox{co-NP-hardness} follows from co-NP-completeness of theoremhood in classical logic (since a propositional formula is a classical tautology iff it is a theorem of an \AA qvist logic).
\end{proof}

Moreover, with simpler finite-model characterization from \figureword~\ref{tab:finite_model_taxonomy} a countermodel definition can be naturally encoded with a propositional formula of a polynomial size (see \appendixword~\ref{sec:SAT_encoding} for the full encodings).
This propositional formula can be given to any SAT-solver for efficient theoremhood checking and countermodels can be reconstructed from classical models found by the solver.


\section*{Concluding remark}
In this paper, we provide small model constructions for \AA qvist's logics, which can be used to understand theoretical properties of these logics (such as finite-model semantical characterizations and complexity) and to generate countermodels for non-valid formulas using
SAT-solvers.
Ideally, this should be complemented by analytic calculi which provide transparent derivations for valid formulas. 
We plan to explore the relationship between our constructions and hypersequent calculi, aiming for simpler proof-theoretic characterizations, particularly for the challenging logic \F.

\section*{Acknowledgements}

I want to thank Agata Ciabattoni and Dominik Pichler for the helpful discussions and comments on early versions of this paper, Xavier Parent for the explanations regarding the history of preference-based approach in deontic logic, Roman Kuznets for the discussion about the relation between the strong completeness and the finite model property, and the anonymous reviewers for their useful remarks and suggestions.

\vspace{1mm}
\begin{minipage}{0.12\textwidth}
\hspace{-4mm}
\includegraphics[width=1.5cm]{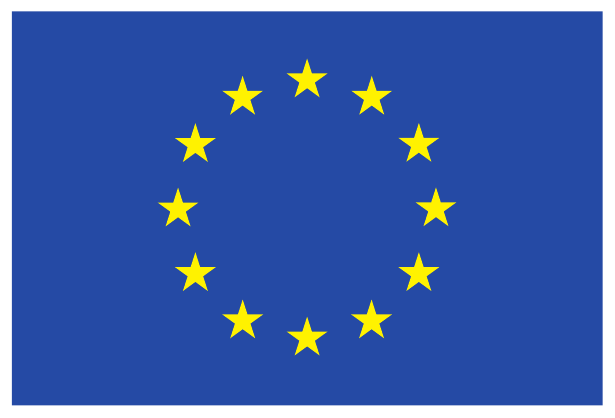}
\end{minipage}%
\hfill
\begin{minipage}{0.83\textwidth}
This work was funded by the European Union’s Horizon 2020 research and innovation programme under grant agreement No 101034440.
\end{minipage}
\bibliographystyle{aiml22}
\bibliography{main}

\begin{thebibliography}{10}
\expandafter\ifx\csname url\endcsname\relax
  \def\url#1{\texttt{#1}}\fi
\expandafter\ifx\csname urlprefix\endcsname\relax\def\urlprefix{URL }\fi
\newcommand{\enquote}[1]{``#1''}

\bibitem{Aqvist}
{\AA}qvist, L., \emph{Deontic logic}, in: D.~M. Gabbay and F.~Guenthner,
  editors, \emph{Handbook of Philosophical Logic: Volume 8}, Springer
  Netherlands, Dordrecht, 2002 pp. 147--264.

\bibitem{E_in_HOL}
Benzm{\"{u}}ller, C., A.~Farjami and X.~Parent, \emph{{\AA}qvist's dyadic
  deontic logic {E} in {HOL}}, {FLAP} \textbf{6} (2019), pp.~733--754.

\bibitem{Burgess}
Burgess, J.~P., \emph{Quick completeness proofs for some logics of
  conditionals}, Notre Dame Journal of Formal Logic \textbf{22} (1981),
  pp.~76--84.

\bibitem{E_hypersequents}
Ciabattoni, A., N.~Olivetti and X.~Parent, \emph{Dyadic obligations: Proofs and
  countermodels via hypersequents}, in: R.~Aydogan, N.~Criado, J.~Lang,
  V.~S{\'{a}}nchez{-}Anguix and M.~Serramia, editors, \emph{{PRIMA} 2022:
  Principles and Practice of Multi-Agent Systems - 24th International
  Conference, Valencia, Spain, November 16-18, 2022, Proceedings},  Lecture
  Notes in Computer Science  \textbf{13753} (2022), pp. 54--71.

\bibitem{F_hypersequents}
Ciabattoni, A., N.~Olivetti, X.~Parent, R.~Ramanayake and D.~Rozplokhas,
  \emph{Analytic proof theory for {{\AA}}qvist's system {F}}, in:
  J.~Maranh{\~{a}}o, C.~Peterson, C.~Stra{\ss}er and L.~van~der Torre, editors,
  \emph{Deontic Logic and Normative Systems - 16th International Conference,
  {DEON} 2023, Trois-Rivi{\`{e}}res, QC, Canada, July 5-7, 2023} (2023), pp.
  79--98.

\bibitem{FCM_G_hypersequents}
Ciabattoni, A. and M.~Tesi, \emph{Sequents vs hypersequents for {{\AA}}qvist
  systems}, in: \emph{The proceedings of International Joint Conference on
  Automated Reasoning (IJCAR)}, 2024.

\bibitem{Friedman_Halpern_small_model_constructions}
Friedman, N. and J.~Y. Halpern, \emph{On the complexity of conditional logics},
  in: J.~Doyle, E.~Sandewall and P.~Torasso, editors, \emph{Proceedings of the
  4th International Conference on Principles of Knowledge Representation and
  Reasoning (KR'94). Bonn, Germany, May 24-27, 1994} (1994), pp. 202--213.

\bibitem{Gabbay_nonmonotonic_reasoning}
Gabbay, D.~M., \emph{Theoretical foundations for non-monotonic reasoning in
  expert systems}, in: K.~R. Apt, editor, \emph{Logics and Models of Concurrent
  Systems} (1985), pp. 439--457.

\bibitem{Mariana_thesis}
Girlando, M., \enquote{{On the Proof Theory of Conditional Logics},} Theses,
  {Aix-Marseille Universite ; Helsinki University} (2019).

\bibitem{general_preference_relations}
Grossi, D., W.~van~der Hoek and L.~B. Kuijer, \emph{Reasoning about general
  preference relations}, Artif. Intell. \textbf{313} (2022).

\bibitem{Hansson_deontic_logic}
Hansson, B., \emph{An analysis of some deontic logics}, Noûs \textbf{3}
  (1969), pp.~373--398.

\bibitem{SEP_preferences}
Hansson, S.~O. and T.~Grüne-Yanoff, \emph{{Preferences}}, in: E.~N. Zalta,
  editor, \emph{The {Stanford} Encyclopedia of Philosophy}, Metaphysics
  Research Lab, Stanford University, 2022, {S}pring 2022 edition .

\bibitem{KLM}
Kraus, S., D.~Lehmann and M.~Magidor, \emph{Nonmonotonic reasoning,
  preferential models and cumulative logics}, Artif. Intell. \textbf{44}
  (1990), pp.~167--207.

\bibitem{KLM_entailment}
Lehmann, D. and M.~Magidor, \emph{What does a conditional knowledge base
  entail?}, Artif. Intell. \textbf{55} (1992), pp.~1--60.

\bibitem{Lewis_counterfactuals}
Lewis, D.~K., \enquote{Counterfactuals,} Blackwell, Malden, Mass., 1973.

\bibitem{max_vs_opt}
Parent, X., \emph{Maximality vs. optimality in dyadic deontic logic}, J.
  Philos. Log. \textbf{43} (2014), pp.~1101--1128.

\bibitem{Xavier_handbook_survey}
Parent, X., \emph{Preference semantics for {H}ansson-type dyadic deontic logic:
  a survey of results}, in: D.~Gabbay, J.~Horty, X.~Parent, L.~van~der Torre
  and R.~van~der Meyden, editors, \emph{Handbook of Deontic Logic and Normative
  Systems (vol. 2)}, College Publications, London, 2021 pp. 7--70.

\bibitem{non_smooth_preferences}
Putte, F. V.~D. and C.~Straßer, \emph{Preferential semantics using non-smooth
  preference relations}, Journal of Philosophical Logic \textbf{43} (2014),
  pp.~903--942.

\bibitem{Valentini_GL}
Sambin, G. and S.~Valentini, \emph{The modal logic of provability. the
  sequential approach}, Journal of Philosophical Logic \textbf{11} (1982),
  pp.~311--342.

\end{thebibliography}

\newpage
\Appendix

\section{Detailed proof of \theoremword~\ref{lem:rearranged_countermodel_conditions}}
\label{sec:inductive_details}

\begin{theorem}
Let $\varphi$ be a formula and $M = \tuple{W, \preffeq, \Val} \in \Models$ such that ${M \not\models \varphi}$. If a model $M' = \tuple{W', \preffeq', \Val'} \in \Models$ rearranges $M$ with the prototype function $\prot \colon W' \to W$ then the following four conditions are sufficient for $M' \not\models \varphi$.
\begin{enumerate}
\item[\condi] There exists $v' \in W'$ such that $M, \prot(v') \not\models \varphi$.
\item[\condii] For any $\Box \beta \in \NegBox$ there exists $v' \in W'$ such that $M, \prot(v') \not\models \beta$.
\item[\condiii] For any ${\Ob{\gamma}{\alpha} \in \NegOb}$ there exists ${v' \in W'}$ such that ${\prot(v') \in \maximalSetM{\tset{\alpha}{M}} \setminus \tset{\gamma}{M}}$ and for all ${u' \preff' v}'$ holds ${\prot(u') \preff \prot(v')}$. 
\item[\condiv] For any $w' \in W'$, for all $\Ob{\gamma}{\alpha} \in \PosOb$ if there exists $u \preff \prot(w')$ such that $M, u \models \alpha$ then there exists $u' \preff' w'$ such that $M, \prot(u') \models \alpha$.
\end{enumerate}
\end{theorem}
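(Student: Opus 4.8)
The plan is to prove a strengthened statement by structural induction and then read off the conclusion from condition \condi. Concretely, I would establish the biconditional

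$(\ast)$: for every $w' \in W'$ and every $\psi \in \SubF{\varphi}$, we have $M', w' \models \psi$ if and only if $M, \prot(w') \models \psi$.

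Granting $(\ast)$ for $\psi = \varphi$, the desired conclusion is immediate: condition \condi supplies some $v'$ with $M, \prot(v') \not\models \varphi$, whence $M', v' \not\models \varphi$ by $(\ast)$, so $\tset{\varphi}{M'} \neq W'$ and $M' \not\models \varphi$.

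For the induction, the base case $\psi = x \in \PropVars$ is exactly the defining property of a rearrangement ($w' \in \Val'(x)$ iff $\prot(w') \in \Val(x)$), and the Boolean cases $\neg\chi$ and $\chi_1 \wedge \chi_2$ follow directly from the induction hypothesis, since their truth at a world depends only on the truth of immediate subformulas at that same world. Because the truth of $\Box\beta$ and of $\Ob{\gamma}{\alpha}$ is world-independent, for each modal subformula $(\ast)$ reduces to comparing the global verdicts $M \models \cdot$ and $M' \models \cdot$, and I would split on whether the modality is validated by $M$. For $\Box\beta \in \PosBox$ the induction hypothesis transfers $M, \prot(w') \models \beta$ for every $w'$ into $M', w' \models \beta$, giving $M' \models \Box\beta$; for $\Box\beta \in \NegBox$ condition \condii supplies a falsifying prototype, which the induction hypothesis turns into a world of $M'$ falsifying $\beta$, so $M' \not\models \Box\beta$.

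The two obligation cases carry the real content, and the hard part is transporting \emph{maximality} faithfully in both directions. For $\Ob{\gamma}{\alpha} \in \NegOb$ I would take the witness $v'$ from condition \condiii: its prototype lies in $\maximalSetM{\tset{\alpha}{M}} \setminus \tset{\gamma}{M}$, so by the induction hypothesis $v' \in \tset{\alpha}{M'}$ and $v' \notin \tset{\gamma}{M'}$; the clause that $\prot(u') \preff \prot(v')$ for all $u' \preff' v'$ is precisely what rules out any $\alpha$-world strictly below $v'$ in $M'$ (such a $u'$ would project to an $\alpha$-world strictly below the $\alpha$-maximal $\prot(v')$, contradicting maximality), so $v' \in \maximalSet{\preff'}{\tset{\alpha}{M'}}$ and $M' \not\models \Ob{\gamma}{\alpha}$. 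The symmetric and more delicate difficulty appears for $\Ob{\gamma}{\alpha} \in \PosOb$, where I must ensure no \emph{spurious} maximal $\alpha$-world is created in $M'$: given any $w' \in \maximalSet{\preff'}{\tset{\alpha}{M'}}$, I claim $\prot(w') \in \maximalSetM{\tset{\alpha}{M}}$, for otherwise some $u \preff \prot(w')$ satisfies $\alpha$, and condition \condiv produces $u' \preff' w'$ whose prototype satisfies $\alpha$, so by the induction hypothesis $u' \in \tset{\alpha}{M'}$ contradicts the $\preff'$-maximality of $w'$. Then $M \models \Ob{\gamma}{\alpha}$ forces $\prot(w') \in \tset{\gamma}{M}$, and the induction hypothesis gives $M', w' \models \gamma$; as $w'$ was arbitrary, $M' \models \Ob{\gamma}{\alpha}$. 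This closes the induction and hence the proof; the crux throughout is that conditions \condiii and \condiv are exactly the two halves needed to make $\preff'$-maximality for $\alpha$ in $M'$ match $\preff$-maximality for $\alpha$ in $M$ along $\prot$.
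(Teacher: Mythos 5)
Your proposal is correct and follows essentially the same route as the paper's proof: the same strengthened biconditional ($M', w' \models \psi$ iff $M, \prot(w') \models \psi$ for all $\psi \in \SubF{\varphi}$), proved by the same induction with case splits on $\PosBox$/$\NegBox$ and $\PosOb$/$\NegOb$, using conditions \condii, \condiii, and \condiv exactly where the paper does, and concluding via \condi. The only difference is cosmetic: in the $\PosOb$ case you leave implicit the (trivial) step that $w' \in \tset{\alpha}{M'}$ yields $\prot(w') \in \tset{\alpha}{M}$ by the induction hypothesis, which the paper states explicitly.
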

\begin{proof}
We will prove that for any $w' \in W'$ and any $\psi \in \SubF{\varphi}$ holds $M', w' \models \psi$ iff $M, \prot(w') \models \psi$. Then $M' \not\models \varphi$ follows by the condition \condi.
The proof is by induction on $\psi$ (we use the abbreviation IH(s) to refer to the inductive hypothesis(-es)).
\begin{itemize}
\item $\psi = x \in \PropVars$. $w \in \Val(x)$ iff $\prot(w) \in \Val'(x)$ by the definition of the prototype function.
\item $\psi = \neg \psi'$. Directly from IH for $\psi'$.
\item $\psi = \psi_1 \wedge \psi_2$. Directly from IHs for $\psi_1$ and $\psi_2$.
\item $\psi = \Box \beta$ and $\psi \in \PosBox$. $M \models \Box \beta$, so for all $w' \in W'$ holds $M, \prot(w') \models \beta$, so by IH for all $w' \in W'$ holds $M', w' \models \beta$, so $M' \models \Box \beta$.
\item $\psi = \Box \beta$ and $\psi \in \NegBox$. By \condii there is $v' \in W'$ such that $M, \prot(v') \not\models \beta$, so by IH holds $M', v' \not\models \beta$, so $M' \not\models \Box \beta$.
\item $\psi = \Ob{\gamma}{\alpha}$ and $\psi \in \PosOb$. Take any $w' \in W'$ such that $w' \in \maximalSet{\preff'}{\tset{\alpha}{M'}}$. Then (1) $\prot(w') \in \maximalSet{\preff}{\tset{\alpha}{M}}$: $\prot(w') \in \tset{\alpha}{M}$ by IH, and there can be no $s \preff \prot(w')$ such that $s \in \tset{\alpha}{M}$ (otherwise there would be $u' \preff' w'$ such that $M', u' \models \alpha$ by \condiv and IH). Since $M \models \Ob{\gamma}{\alpha}$, (1) implies $M, \prot(w') \models \gamma$, which implies $M', w' \models \gamma$ by IH. Thus $M' \models \Ob{\gamma}{\alpha}$.
\item $\psi = \Ob{\gamma}{\alpha}$ and $\psi \in \NegOb$. For the corresponding world $v' \in W'$ from \condiii we have $v' \notin \tset{\gamma}{M'}$ (by IH and the choice of $v'$) and $v' \in \maximalSet{\preff'}{\tset{\alpha}{M'}}$ (since $v' \in \tset{\alpha}{M'}$ by IH and the choice of $v'$, and for all $u' \preff' v'$ we have $u' \notin \tset{\alpha}{M'}$ by \condiii and IH), so $M' \not\models \Ob{\gamma}{\alpha}$. 
\end{itemize}
\end{proof}

\section{Strong completeness vs finite-modal characterization}
\label{sec:strong_completeness}

In this appendix, we show that the entailment from the infinite set of premises in \Sfive can not be characterized with a class of models that contains only finite models. Consider the infinite set of propositional variables $\{ x_i \}_{i=1}^\infty$ and an infinite sequence of formulas $\{ \varepsilon_n \}_{n=1}^\infty$ defined as
\[ \varepsilon_n = x_n \wedge \bigwedge_{i=1}^{n - 1} \neg x_i \]
Clearly, two formulas $\varepsilon_n$ and $\varepsilon_m$ for $n \neq m$ can not be both satisfied in one world. Therefore, the formulas from the set $\Gamma^\Diamond = \{ \Diamond \varepsilon_n \}_{n=1}^\infty$ can not all be simultaneously valid in any finite model. So entailment $\Gamma^\Diamond \vDash \bot$ holds in all finite Kripke models, but does not hold in $\Sfive$, since an infinite Kripke model satisfying all formulas from $\Gamma^\Diamond$ simultaneously can be easily constructed.

\section{Propositional encoding for \AA qvist's logics}
\label{sec:SAT_encoding}

This appendix provides an embedding of every Åqvist's logic into the classical propositional logic. Specifically, for any given modal formula $\varphi$ in some Åqvist's logic $\logicsymb$ we define a propositional formula $\encodedf{\logicsymb}{\varphi}$, such that there is a one-to-one correspondence between classical countermodels for $\encodedf{\logicsymb}{\varphi}$ and preference countermodels for $\varphi$ with $N(\varphi)$ worlds (where $N(\varphi)$ is a size bound given by our small model construction) satisfying model conditions for $\logicsymb$ in \figureword~\ref{tab:finite_model_taxonomy}. As a result, $\varphi$ is valid in $\logicsymb$ iff $\encodedf{\logicsymb}{\varphi}$ is classically valid.

To encode a countermodel $M$ for a formula $\varphi$ with words $\{w_1, \dots, w_{N(\varphi)}\}$ we will use the following variables:
\begin{itemize}
\item $p_{i,j}$ for $1 \le i,j \le N(\varphi)$ to encode the fact $w_i \preffeq w_j$
\item $v_i^{\psi}$ for $1 \le i \le N(\varphi)$ and $\psi \in \SubF{\varphi}$ to encode the fact $M, w_i \models \psi$
\end{itemize}

If $\psi$ is not a propositional variable, $v_i^{\psi}$ is determined by {$v$-variables} for the immediate subformulas of $\psi$ and this can be straightforwardly encoded by definition by a set of propositional equivalences of polynomial size:

\[ \begin{array}{lccccccc}
\clausesmod & = & & \{ &
 v_i^{\neg \psi} \Leftrightarrow (\neg v_i^{\psi}) & \;\mid\; &
(\neg \psi) \in \SubF{\varphi} & \} _{1 \le i \le N(\varphi)} \\    
& & \cup & \{ & 
v_i^{\psi_1 \wedge \psi_2} \Leftrightarrow (v_i^{\psi_1} \wedge v_i^{\psi_2}) & \;\mid\; &
(\psi_1 \wedge \psi_2) \in \SubF{\varphi} & \} _{1 \le i \le N(\varphi) } \\
& & \cup & \{ & 
v_i^{\Box \beta} \Leftrightarrow ( \bigwedge_{j = 1}^{N(\varphi)} v_j^{\beta} ) & \;\mid\; &
(\Box \beta) \in \SubF{\varphi} & \} _{1 \le i \le N(\varphi) } \\
& & \cup & \multicolumn{5}{l}{
\{ \; v_i^{\Ob{\gamma\;}{\;\alpha}} \Leftrightarrow ( \bigwedge_{j = 1}^{N(\varphi)} (v_j^{\gamma} \vee \neg v_j^{\alpha} \vee (\bigvee_{t=1}^{N(\varphi)} (p_{t,j} \wedge \neg p_{j, t} \wedge v_t^{\alpha}) ) ) 
} \\
& & & & 
 & \;\mid\; &
\Ob{\gamma}{\alpha} \in \SubF{\varphi} & \} _{1 \le i \le N(\varphi) } \\
\end{array} \]

Transitivity and totality of $\preffeq$ can be encoded straightforwardly by definition too:

\[ \begin{array}{lccccl}
\clausestr & = & & \{ &
 (p_{i,j} \wedge p_{j,k}) \Rightarrow p_{i,k} &
\} _{1 \le i,j,k \le N(\varphi) } \\    
\clausestot & = & & \{ & 
p_{i,j} \lor p_{j,i} & 
\} _{1 \le i,j \le N(\varphi) } \\
\end{array} \]

To encode acyclicity of $\preffeq$ we can reformulate it equivalently as follows: there exists a relation $\preffeq^t$ that is transitive, irreflexive and contains $\preffeq$ (such relation exists iff the positive transitive closure of $\preffeq$ is irreflexive, i.e. when $\preffeq$ is acyclic). Introducing additional variables $t_{i,j}$ for $1 \le i,j \le N$ to encode the fact $w_i \preffeq^t w_j$, we can then encode acyclicity with the following set of formulas:

\[ \begin{array}{lccccl}
\clausesacyc & = & & \{ &
 (t_{i,j} \wedge t_{j,k}) \Rightarrow t_{i,k} &
\} _{1 \le i, j, k \le N(\varphi) } \\    
& & \cup & \{ & 
\neg t_{i,i} &
\} _{1 \le i \le N(\varphi) } \\
& & \cup & \{ & 
p_{i,j} \Rightarrow t_{i,j} & \} _{1 \le i, j \le N(\varphi) } \\
\end{array} \]

Putting everything together, we get the following encodings of countermodels (falsifying $\varphi$ in the world $w_1$) in \AA qvist's logics as formulas of polynomial size.

\[ 
\begin{array}{lcl}
\encodedf{\E}{\varphi} & = & \neg v_1^{\varphi} \wedge \bigwedge \clausesmod  \\
\encodedf{\F}{\varphi} & = & \neg v_1^{\varphi} \wedge \bigwedge \clausesmod \wedge \bigwedge \clausesacyc  \\
\encodedf{\FCM}{\varphi} & = & \neg v_1^{\varphi} \wedge \bigwedge \clausesmod \wedge \bigwedge \clausestr  \\
\encodedf{\G}{\varphi} & = & \neg v_1^{\varphi} \wedge \bigwedge \clausesmod \wedge \bigwedge \clausestr \wedge \bigwedge \clausestot \\
\end{array}
\]


\end{document}